\newcommand*\bigcdot{\mathpalette\bigcdot@{.75}}
\newcommand*\bigcdot@[2]{\mathbin{\vcenter{\hbox{\scalebox{#2}{$\m@th#1\bullet$}}}}}
\newtheoremstyle{TheoremNum}
		{\topsep}{\topsep}					% space between body and thm
		{\itshape}							% thm body font
		{}								% indent amount (empty = no indent)
		{\bfseries}							% thm head font
		{.}								% punctuation after thm head
		{ }								% space after thm head
		{\thmname{#1}\thmnote{ \bfseries #3}}	% thm head spec
\theoremstyle{TheoremNum}
\theoremstyle{plain}
\newtheorem{lemma}{Lemma}[section]
\newtheorem{corollary}[lemma]{Corollary}
\theoremstyle{remark}
\newtheorem{remark}[lemma]{Remark}
\theoremstyle{definition}
\newtheorem{definition}[lemma]{Definition}
\newtheorem{example}[lemma]{Example}
\newcommand{\intInterval}[2]{[\![#1,#2]\!]}
\def\BoolSet{\textsc{Bool}}
\def\True{\textsc{True}}
\def\False{\textsc{False}}
\def\LCVP{LCVP}
\def\genDomSym{\lambda}
\def\ColorSet{\textsc{Colors}}
\def\NumberOfColors{q}
\def\WeightSet{\textsc{Weights}}
\def\wlesseq{\preceq}
\def\minWeights{\min}
\def\Error{\textsc{Error}}
\def\weightsSum{\circledast}
\def\bigweightsSum{\bigcircledast}
\def\NumberOfLabels{k}
\newcommand{\notFinalLabels}[1]{\overline{\ell(#1)}}
\def\upcomingNeighbors{n}
\def\tcheck{t_{check}(|V(G)|, q, \mathcal{N})}
\def\SizeOfColorClass{s}
\newcommand{\defProb}[5]{
	\noindent{\renewcommand{\arraystretch}{1}\setlength{\tabcolsep}{2pt}
		\begin{tabularx}{\textwidth}{lX}
			\multicolumn{2}{X}{\sc #1}			\\ % Title
			\emph{{#2}:}	& \multicolumn{1}{X}{#3}	\\ % Input
			\emph{{#4}:}	& \multicolumn{1}{X}{#5}	\\ % Question
		\end{tabularx}
	}
	\smallskip
}
\newcommand{\defProbEnglish}[3]{
	\defProb{#1}{Instance}{#2}{Question}{#3}
}
\newcommand{\setst}[2]{\{ #1 : #2 \}}
\DeclareMathOperator*{\bigcircledast}{\scalerel*{\circledast}{\sum}}
\newcommand{\quantifierFormula}[3]{#1 #2 ,\, #3}
\newcommand{\forallFormula}[2]{\quantifierFormula{\forall}{#1}{#2}}
\begin{document}
\title{Locally checkable problems parameterized by clique-width}

%---------------------------------------------------------------------------------------------------
\author
	[N. Baghirova]
	{Narmina Baghirova}
\address{
	University of Fribourg, Department of Informatics, Fribourg, Switzerland
}
\email{
	narmina.baghirova@unifr.ch
}

%----------------------------------
\author
	[C.L. Gonzalez]
	{Carolina Luc\'ia Gonzalez}
\address{
	CONICET-Universidad de Buenos Aires.
	Instituto de Investigaci\'on en Ciencias de la Computaci\'on (ICC).
	Buenos Aires, Argentina.
}
\email{
	cgonzalez@dc.uba.ar
}

%----------------------------------
\author
	[B. Ries]
	{Bernard Ries}
\address{
	University of Fribourg, Department of Informatics, Fribourg, Switzerland
}
\email{
	bernard.ries@unifr.ch
}

%----------------------------------
\author
	[D. Schindl]
	{David Schindl}
\address{
	University of Fribourg, Department of Informatics, Fribourg, Switzerland
}
\email{
	david.schindl@unifr.ch
}

\begin{abstract}
We continue the study initiated by Bonomo-Braberman and Gonzalez in 2020 on $r$-locally checkable problems.
We propose a dynamic programming algorithm that takes as input a graph with an associated clique-width expression and solves a $1$-locally checkable problem under certain restrictions.
We show that it runs in polynomial time in graphs of bounded clique-width, when the number of colors of the locally checkable problem is fixed.
Furthermore, we present a first extension of our framework to global properties by taking into account the sizes of the color classes, and consequently enlarge the set of problems solvable in polynomial time with our approach in graphs of bounded clique-width.
As examples, we apply this setting to show that, when parameterized by clique-width, the $[k]-$Roman domination problem is FPT, and the $k$-community problem, Max PDS and other variants are XP.
\end{abstract}

\keywords{%
locally checkable problem, clique-width, dynamic programming, coloring%
}

\subjclass[2010]{% Same codes in 2020
05C69, % Vertex subsets with special properties (dominating sets, independent sets, cliques, etc.)
05C85, % Graph algorithms (graph-theoretic aspects)
68Q25, % Analysis of algorithms and problem complexity
68R10% Graph theory (including graph drawing) in computer science
}

\maketitle

%----------------------------------
\section{Introduction}
\label{sec:introduction}

Many graph problems can be stated as a sort of partitioning, or equivalently, as a sort of coloring problem. Furthermore, most decision problems on graphs from the literature belong to the class NP, and their certificate verification algorithms often consist in checking some \textit{local} property for each vertex, i.e. involving itself and its neighborhood only, plus possibly some \textit{global} property concerning, for instance, the sizes or the connectivity of some subsets of vertices.
One could therefore try to cover a broad variety of these problems under a same umbrella, and hence develop efficient algorithms to solve them at once. With this objective in mind, several definitions of subsets of partitioning problems, where each vertex has to satisfy a local property, as well as extensions of these sets of problems including some global property, have been proposed and shown to be solvable in polynomial time in various graph classes.
In particular, in~\cite{LCPTreewidth}, the authors defined so-called \emph{$r$-locally checkable problems}.
Each of these problems has an associated set of colors and a \emph{check function}, that is, a function that takes as input a vertex $v$ of the graph and a coloring of the $r$-neighborhood of $v$ (i.e. the set of vertices at distance at most $r$ from $v$) and outputs $\True$ or $\False$.
A \emph{proper coloring} of the input graph $G$ is defined as a coloring $c$ of the vertices such that, for every vertex $v$, the check function applied to $v$ and the restriction of $c$ to the $r$-neighborhood of $v$ outputs $\True$.
They also consider a set of weights with a total order, and associate a weight to each pair of vertex and possible color.
The weight of a coloring $c$ is then naturally obtained by combining the weights of the pairs $(v,c(v))$.
Then, an $r$-locally checkable problem consists in finding the minimum weight of a proper coloring of the input graph $G$. Examples of $r$-locally checkable problems include {\sc $k$-Coloring}, {\sc Maximum Independent Set} and {\sc Minimum Dominating Set}~\cite{LCPTreewidth}.

Since many $r$-locally checkable problems are hard on general graphs, it is of interest to determine under which conditions (on the check function and the set of colors) we can efficiently solve them for a given class of graphs.
In~\cite{LCPTreewidth}, the authors showed that, under mild conditions, $r$-locally checkable problems can be solved in polynomial time in graphs of bounded tree-width.
In this paper, we will focus on 1-locally checkable problems with an associated \emph{color-counting} check function, defined as follows.
\begin{definition}
Let $G$ be a graph and $\ColorSet = \{a_1, \ldots, a_{\NumberOfColors}\}$ be a set of colors.
A check function $f$ is \emph{color-counting} if it only depends on the vertex $v$, the color it receives and, for each color $a \in \ColorSet$, the number of neighbors of $v$ of color $a$.

In other words, a check function $f$ is \emph{color-counting} if there exists a function $f'$ such that
\begin{center}
$f(v, c) = f'(v, c(v), \upcomingNeighbors_1, \ldots, \upcomingNeighbors_{\NumberOfColors})$
\end{center}
for every vertex $v \in V(G)$ and every coloring $c$ of the neighborhood of $v$ (denoted by $N_G(v)$), where $\upcomingNeighbors_j = |\setst{u \in N_G(v)}{c(u) = a_j}|$ for all $j \in \{1, \ldots, \NumberOfColors\}$.
\end{definition}
Since we are only going to work with color-counting check functions in this paper, we will directly refer to them as $check(v, a, \upcomingNeighbors_1, \ldots, \upcomingNeighbors_{\NumberOfColors})$.

In~\cite{LCPmimwidth}, the authors analyzed the restrictions on 1-locally checkable problems with respect to mim-width.
They define \emph{$d$-stable} check functions, which are a subset of the color-counting check functions, and we will use them to improve our complexity results.
\begin{definition}[\cite{LCPmimwidth}]
Let $d \in \mathbb{N}$.
Let $G$ be a graph, $\ColorSet$ be a set of $\NumberOfColors$ colors and $check$ be a color-counting check function.
We say that $check$ is \emph{$d$-stable} if for all $v \in V(G)$, $a \in \ColorSet$ and non-negative integers $\upcomingNeighbors_1, \ldots, \upcomingNeighbors_{\NumberOfColors}$ we have
\begin{center}
	$check(v, a, \upcomingNeighbors_1, \ldots, \upcomingNeighbors_{\NumberOfColors}) = check(v, a, \min(d, \upcomingNeighbors_1), \ldots, \min(d, \upcomingNeighbors_{\NumberOfColors}))$.
\end{center}
\end{definition}

We present a dynamic programming algorithm, which is XP parameterized by clique-width, for 1-locally checkable problems with a constant number of colors and a color-counting check function.
Moreover, this algorithm is FPT when the check function is also $d$-stable, for any constant $d$.
In a second step, we extend our framework in such a way that we are able to ensure that the size of a given color class belongs to some predefined set of integers.
By including this global property for as many colors classes as necessary, the application of our framework allows to obtain first XP algorithms parameterized by clique-width for problems such as \textsc{$k$-community}, \textsc{Max PDS} and \textsc{(global) [$k$]-Roman Domination}, as well as some variants of them.
A generalization of this framework to $r$-locally checkable problems, for any fixed $r$, would be quite natural, and the authors of this paper are currently working on it.

The set of locally checkable problems considered here is a subset of the one considered in~\cite{LCPTreewidth}, but notice that our assumptions above are not too restrictive.
Indeed, if we do not impose these assumptions then, as explained in~\cite{LCPTreewidth}, one obtains locally checkable problems that are NP-hard on complete graphs (which have clique-width 2) and thus, it is unlikely to find XP algorithms parameterized by clique-width for this more general class of locally checkable problems.

%---------------------------------------------------------------------------------------------------
As mentioned above, several definitions of subsets of partitioning problems have been defined in the literature and shown to be solvable in polynomial time in various graph classes.
We cite here some of the corresponding publications that are the most closely related to our work.

%---------------------------------------------------------------------------------------------------
In~\cite{LCVSVP-paper2,LCVP-paper3,LCVSVP-paper}, the authors studied a large class of vertex partitioning problems called \emph{locally checkable vertex partitioning} ({\LCVP}) problems.
In these problems, a $q \times q$ matrix $D$ is given, where each entry is a finite or cofinite set of integers.
A partition of the set of vertices $V_1,\ldots, V_q$ is sought, such that for each $i,j\in \{1, \ldots, q\}$, we have $|N(v)\cap V_j|\in D[i,j]$ for all $v\in V_i$.
Empty partition classes are allowed.
In~\cite{LCVSVP-paper}, Telle and Proskurowski solved these problems in polynomial time on graphs of bounded treewidth.
This result was generalized in~\cite{LCVSVP-paper2}, where Bui-Xuan, Telle and Vatshelle gave an algorithm that solves LCVP problems given a decomposition tree of the input graph.
In the same paper, they proved that this algorithm is FPT parameterized by boolean-width, and later in~\cite{LCVP-paper3}, Jaffke et al. showed that the same algorithm is XP parameterized by mim-width, when a suitable decomposition tree is given.
As shown in~\cite{LCPmimwidth}, every LCVP problem can be modeled as a 1-locally checkable problem with a $d$-stable check function (where $d$ is as defined in~\cite{LCVSVP-paper2}):
\begin{center}
	$check(v, a, \upcomingNeighbors_1, \ldots, \upcomingNeighbors_q) = \left(\forallFormula{j \in \{1, \ldots, q\}}{\upcomingNeighbors_j \in D[a, j]}\right).$
\end{center}
While many locally checkable problems are expressible as LCVP problems, there are still some relevant problems that do not admit such a characterization, but do belong to the set of problems we analyze in this paper.
Examples include \textsc{$[k]-$Roman domination} and \textsc{balanced $k$-community}, see Section~\ref{sec:Applications}.

In~\cite{GERBER2003719}, Gerber and Kobler studied a variation of {\LCVP}, with two modifications. On one hand they restrict the entries of $D$ to sets of consecutive integers, and on the other hand, they associate to each vertex $v$ a set $\rho(v)\subseteq \{1, \ldots, q\}$ such that $v\in V_i\Rightarrow i\in \rho(v)$.
They show that the problems in this framework are XP when parameterized by clique-width. Notice that these problems are also covered by our framework.

%---------------------------------------------------------------------------------------------------
In~\cite{courcelle_boundedCliquewidth}, Courcelle, Makowsky and Rotics presented an algorithm which, given as input a graph with an associated clique-width expression, solves problems expressible in a certain variation of Monadic Second-Order Logic, called MSO$_1$.
On graphs of clique-width at most $k$, the running time of their algorithm is linear in the size of the input graph.
However, as pointed out in~\cite{FRICK20043}, the multiplicative constant grows extremely fast with $k$.

%---------------------------------------------------------------------------------------------------
Following a similar research line, in their recent article~\cite{DN-logic}, Bergougnoux, Dreier and Jaffke defined an extension of existential MSO$_1$, which they call \emph{distance neighborhood logic with acyclicity and connectivity constraints} (A\&C DN logic).
They provided an algorithm that solves problems expressible in this logic, given a suitable branch decomposition of the input graph.
The complexity of the algorithm is expressed in terms of the \emph{$d$-neighborhood} equivalence relation (see~\cite{LCVSVP-paper2}), allowing them to state their main result parameterized by mim-width (XP), tree-width, rank-width or clique-width (FPT), with a single-exponential dependence.
As shown in~\cite{LCPmimwidth}, all locally checkable problems with constant number of colors and $d$-stable check functions, for some constant $d$, can be expressed in A\&C DN logic.
However, locally checkable problems with a color-counting check function that is not $d$-stable for any constant $d$, and possibly extended with global properties, such as \textsc{balanced $k$-community}, cannot be directly expressed by an A\&C DN logic formula of fixed length.

%---------------------------------------------------------------------------------------------------
This paper is structured as follows.
In Section~\ref{sec:Preliminaries}, we give some definitions and notations.
In Section~\ref{sec:LCP}, we formally present our framework, while in Section~\ref{sec:Algorithm}, we describe the dynamic programming algorithm, prove its correctness and analyse its complexity.
Section~\ref{sec:SizeGlobProp} deals with the extension of our results of the previous section to include the global size property.
Finally, in Section~\ref{sec:Applications}, we apply our results to some selected problems.
Due to space constraints, we omit the proofs and present them in the appendix.

%----------------------------------
\section{Preliminaries}
\label{sec:Preliminaries}

%---------------------------------------------------------------------------------------------------
\subsection{Algebraic definitions}
Let $f \colon D \to C$ be a function and let $S \subseteq D$.
We denote by $f|_S$ the function $f$ restricted to the domain $S$, that is, the function $f|_S \colon S \to C$ is defined as $f|_S(x) = f(x)$ for all $x \in S$. 
Let $D'$ be a set such that $D \cap D' = \emptyset$, and let $g \colon D' \to C$.
We denote by $f \cup g$ the function $h \colon D \cup D' \to C$ such that $h(x) = f(x)$ if $x \in D$, and $h(x) = g(x)$ if $x \in D'$.
Note that, since $D$ and $D'$ are disjoint, $f \cup g$ is well defined.

We denote by $\intInterval{a}{b}$, with $a,b \in \mathbb{Z}$ and $a \leq b$, the set of all integer numbers greater than or equal to $a$ and less than or equal to $b$, that is $\{a, a+1, \ldots, b\}$.
Furthermore, we use $\BoolSet$ to denote the set $\{\True,\False\}$.

%---------------------------------------------------------------------------------------------------
\subsection{Graph theoretical definitions}
Throughout this paper, we consider simple, finite and undirected graphs.
For graph theoretical notions not defined here, the reader is referred to~\cite{DouglasWest}.

%---------------------------------------------------------------------------------------------------
%--------------------------------------------------------------------------------------------------------------------------------------------------
The notion of clique-width of a graph $G$, denoted by $cw(G)$, was first introduced in~\cite{courcelle_CW_Introduction}. It is defined as the minimum number of labels needed to construct $G$ using the following 4 operations:
\begin{itemize}
\item creation of a new vertex $v$ with label $i$ (denoted by $i(v)$);
\item disjoint union of two labeled graphs $G_1$ and $G_2$ (denoted by $G_1 \oplus G_2$);
\item join between two labels $i$ and $j$, $i \neq j$, i.e. adding an edge between every vertex with label $i$ and every vertex with label $j$ (denoted by $\eta_{i,j}$);
\item renaming of label $i$ to label $j$, i.e. every vertex with label $i$ gets label $j$ (denoted by $\rho_{i \rightarrow j}$).
\end{itemize}

 Given a graph class ${\mathcal G}$, the clique-width of $\mathcal{G}$ is $cw(\mathcal{G}) = \sup\{cw(G) \mid G \in \mathcal{G}\}$.
We say that $\mathcal{G}$ is \emph{of bounded clique-width} if $cw(\mathcal{G}) < \infty$.

A \textit{clique-width expression} is simply a well-formed expression of operations each corresponding to one of the four operations mentioned above.
For a clique-width expression $e$, we denote by $G_e$ the graph constructed by $e$.
If the number of distinct labels used in a clique-width expression $e$ is at most $\NumberOfLabels$, then we say it is a \textit{clique-width $\NumberOfLabels$-expression}.
It was shown in~\cite{irredundantCliquewidth} that any graph $G$ admitting a clique-width $\NumberOfLabels$-expression also admits an \emph{irredundant clique-width $\NumberOfLabels$-expression}, i.e., such that whenever we execute a join operation $\eta_{i,j}$, there are no already existing edges between vertices with label $i$ and vertices with label $j$.

Consider a clique-width expression $e$ and the corresponding graph $G_e$. An \textit{expression tree} of $G_e$ is a rooted binary tree $T_e$ defined as follows:
\begin{itemize}
\item The nodes of $T_e$ are of four types corresponding to operations $i(\cdot)$, $\oplus$, $\eta$ and $\rho$.
\item The leaves of $T_e$ correspond to the creation operation $i(\cdot)$.
\item A disjoint union node $\oplus$ corresponds to the disjoint union of the graphs associated with its two children.
\item A join node $\eta_{i,j}$ corresponds to the graph associated with its unique child in which we make all vertices of label $i$ adjacent to all vertices of label $j$.
\item A relabeling node $\rho_{i\rightarrow j}$ corresponds to the graph associated with its unique child in which we change label $i$ to label $j$.
\item The graph $G_e$ corresponds to the graph associated with the root of $T_e$.
\end{itemize}

Notice that for every node $t \in V(T_e)$, the subtree of $T_e$ rooted at $t$ defines a clique-width expression $e_t$ the corresponding graph of which, denoted by $G_{e_t}$, is a subgraph of $G_e$. We say that $e'$ is a \emph{subexpression} of $e$ if $e'$ is the expression determined by the subtree of $T_e$ rooted at some node $t \in V(T_e)$. Consider any vertex $v$ in $G_{e_t}$ for some $t \in V(T_e)$. Then all neighbors of $v$ in $G_e$ which are not yet neighbors of $v$ in $G_{e_t}$, i.e. the edges between $v$ and these vertices are only defined by the ancestor operations of $t$ in $T_e$, are said to be \emph{upcoming neighbors of $v$ with respect to $e_t$}.
Notice that for any two vertices in $G_{e_t}$ having the same label, their sets of upcoming neighbors with respect to $e_t$ are identical.

Let $e$ be a clique-width $\NumberOfLabels$-expression, $G_e$ be its corresponding graph and let $T_e$ be an expression tree of $G_e$.
We define the function $\ell_e : V(G) \to \intInterval{1}{\NumberOfLabels}$ such that $\ell_e(v)$ is the final label of $v$, i.e. the label of $v$ after the operation corresponding to the root of $T_e$.
We also define $\notFinalLabels{e}$ as the set of labels $i$ such that there exists no $v \in V(G_e)$ such that $\ell_e(v) = i$.

In the remaining of our paper, we will only consider irredundant clique-width $\NumberOfLabels$-expressions where in any relabeling operation $\rho_{i \rightarrow j}(e)$ we have $j\not\in\notFinalLabels{e}$.
Notice that under these assumptions the total number of operations in such a clique-width expression of a graph $G$ is in $O(|V(G)|+|E(G)|)$.

%---------------------------------------------------------------------------------------------------
\subsection{Finite-state automata}
A \emph{deterministic finite-state automaton} is a five-tuple $(Q, \Sigma, \delta, q_0, F)$ that consists of
\begin{itemize}
\item $Q$: a finite set of \emph{states},
\item $\Sigma$: a finite set of \emph{input symbols} (often called the \emph{alphabet}),
\item $\delta \colon Q\times \Sigma \rightarrow Q$: a \emph{transition function},
\item $q_0 \in Q$: an \emph{initial} (or \emph{start}) \emph{state}, and
\item $F \subseteq Q$: a set of \emph{final} (or \emph{accepting}) \emph{states}.
\end{itemize}

We say that an automaton $M = (Q, \Sigma, \delta, q_0, F)$ \emph{accepts} a string $c_1 \ldots c_n$, with $n \geq 1$, if and only if $c_i \in \Sigma$ for all $1 \leq i \leq n$ and $\delta(\ldots \delta(\delta(q_0, c_1), c_2) \ldots , c_n) \in F$.

For more about automata theory, we refer the reader to~\cite{HopcroftUllman}.

%---------------------------------------------------------------------------------------------------
\subsection{Weight sets}
Let $(\WeightSet, \wlesseq)$ be a totally ordered set with a maximum element (called $\Error$), together with the minimum operation of the order $\wlesseq$ (called $\minWeights$) and a closed binary operation on $\WeightSet$ (called $\weightsSum$) that is commutative and associative, has a neutral element and an absorbing element that is equal to $\Error$, and the following property is satisfied: $s_1 \wlesseq s_2 \Rightarrow s_1 \weightsSum s_3 \wlesseq s_2 \weightsSum s_3$ for all $s_1, s_2, s_3 \in \WeightSet$.
In such a case, we say that $(\WeightSet, \wlesseq, \weightsSum)$ is a \emph{weight set}.

A classic example of a weight set is $(\mathbb{N} \cup \{+\infty\}, \leq, +)$.
Notice that the maximum element is $+\infty$ in this case.
We could also consider the reversed order of natural weights: $(\mathbb{N} \cup \{-\infty\}, \geq, +)$, where the maximum element is now $-\infty$.
Another simple example worth mentioning is $(\{0,1\}, \leq, \max)$.

%----------------------------------
\section{Color-counting 1-locally checkable problems}\label{sec:LCP}

%---------------------------------------------------------------------------------------------------
Suppose we are given:
\begin{itemize}
% Input graph
\item a simple undirected graph $G$,

% Colors
\item a set $\ColorSet = \{a_1, \ldots, a_\NumberOfColors\}$,

\item for every $v \in V(G)$, a nonempty set $L_{v} \subseteq \ColorSet$ of possible colors for $v$,

% Weights
\item a weight set $(\WeightSet, \wlesseq, \weightsSum)$,

\item for every $v \in V(G)$ and for every $a \in L_{v}$, a weight $\textsc{w}_{v, a} \in \WeightSet - \{\Error\}$ of assigning color $a$ to vertex $v$, and

% check function
\item a color-counting check function $check$.
\end{itemize}

%---------------------------------------------------------------------------------------------------
We say that a coloring $c : V(G) \to \ColorSet$ is \emph{valid} if $c(v) \in L_v$ for all $v \in V(G)$.
The weight of a valid coloring $c$ is $\textsc{w}(c) = \bigweightsSum_{v \in V} \textsc{w}_{v, c(v)}$.
Furthermore, we say that $c$ is a \emph{proper} coloring of $G$ if it is a valid coloring of $G$ and $check(v, c(v), \upcomingNeighbors_1, \ldots, \upcomingNeighbors_{\NumberOfColors})$ is true for every $v \in V(G)$, where $\upcomingNeighbors_j = |\setst{u \in N_G(v)}{c(u) = a_j}|$ for all $j \in \intInterval{1}{\NumberOfColors}$.

A \emph{color-counting 1-locally checkable problem} consists in finding the minimum weight of a proper coloring of the input graph $G$.

%----------------------------------
\section{Algorithm}\label{sec:Algorithm}

Consider a color-counting 1-locally checkable problem $\Pi$ and let $G$ be the input graph and $e_G$ a clique-width $\NumberOfLabels$-expression of $G$.
Let $\mathcal{N} \in \intInterval{1}{|V(G)|}$ be an integer such that
$check(v, a, \upcomingNeighbors_1, \ldots, \upcomingNeighbors_{\NumberOfColors}) = check(v, a, \min(\mathcal{N}, \upcomingNeighbors_1), \ldots, \min(\mathcal{N}, \upcomingNeighbors_{\NumberOfColors}))$
for all $v \in V(G)$, $a \in \ColorSet$ and non-negative integers $\upcomingNeighbors_1, \ldots, \upcomingNeighbors_{\NumberOfColors}$.

In this section, we present an algorithm which computes the minimum weight of a proper coloring of $G$ by using the expression $e_G$ as well as the notion of $(C,N)$-coloring defined hereafter.

\begin{definition}[$(C, N)$-coloring]\label{def:CN-coloring}
Let $e$ be a subexpression of $e_G$, and let $C$ and $N$ be two matrices in $\intInterval{0}{\mathcal{N}}^{\NumberOfLabels \times \NumberOfColors}$.
A valid coloring $c$ of $G_e$ is called a \emph{$(C, N)$-coloring of $G_e$} if the following two conditions hold:
\begin{enumerate}[label=(C\arabic*)]
\item $\min(\mathcal{N}, |\setst{v \in V(G_e)}{c(v) = a \land \ell_e(v) = i}|) = C[i,a]$ for all $i \in \intInterval{1}{\NumberOfLabels}$ and all $a \in \ColorSet$;
\item for all $v$ in $G_e$ we have $check(v, c(v), \upcomingNeighbors_1, \ldots, \upcomingNeighbors_{\NumberOfColors}) = \True$, where $\upcomingNeighbors_j = \min(\mathcal{N}, N[\ell_e(v), a_j] + |\{u \in N_{G_e}(v) : c(u) = a_j\}|)$ for every $j \in \intInterval{1}{\NumberOfColors}$.
\end{enumerate}
\end{definition}
The minimum weight among all possible $(C,N)$-colorings of $G_e$ is denoted by $\lambda(e,C,N)$, i.e. $\lambda(e,C,N) = \min\setst{\textsc{w}(c)}{c \text{ is a } (C,N)\text{-coloring of } G_e}$.
Notice that if no such coloring exists then $\lambda(e, C, N) = \Error$.

The following lemma explains the link between proper colorings and $(C,N)$-colorings.

%---------------------------------------------
\begin{lemma}\label{lem:initialLemma}
Let $\Pi$ be a color-counting 1-locally checkable problem with input graph $G$ and let $e_G$ be a clique-width $\NumberOfLabels$-expression of $G$.
Then the minimum weight of a proper coloring of $G$ equals the minimum among all $\lambda(e_G, C, N_0)$, where $N_0 \in \intInterval{0}{\mathcal{N}}^{\NumberOfLabels \times \NumberOfColors}$ is the matrix whose elements are all $0$ and $C \in \intInterval{0}{\mathcal{N}}^{\NumberOfLabels \times \NumberOfColors}$ is any matrix such that $C[i, a] = 0$ for every $i\in \overline{\ell(e_G)}$ and every $a \in \ColorSet$.

\end{lemma}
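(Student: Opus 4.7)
The plan is to establish a bijection between proper colorings of $G$ and the set of $(C, N_0)$-colorings of $G_{e_G}$ as $C$ ranges over the matrices with $C[i,a] = 0$ whenever $i \in \overline{\ell(e_G)}$, and to show that this bijection preserves weights. The conclusion on the minimum weight then follows immediately.

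First I would observe that since $e_G$ is a clique-width expression of $G$, we have $G_{e_G} = G$, so every vertex of $G$ already has all its neighbors in $G_{e_G}$ and there are no upcoming neighbors with respect to $e_G$. Consequently, for any coloring $c$ of $G$ and any vertex $v$, the value $N_0[\ell_{e_G}(v), a_j] + |\{u \in N_{G_{e_G}}(v) : c(u) = a_j\}|$ appearing in condition (C2) reduces to $|\{u \in N_G(v) : c(u) = a_j\}| = \upcomingNeighbors_j$, where $\upcomingNeighbors_j$ is the count used in the definition of a proper coloring. Here the zero matrix $N_0$ plays an essential role.

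Next I would use the choice of $\mathcal{N}$. By assumption, $check(v, a, \upcomingNeighbors_1, \ldots, \upcomingNeighbors_{\NumberOfColors}) = check(v, a, \min(\mathcal{N}, \upcomingNeighbors_1), \ldots, \min(\mathcal{N}, \upcomingNeighbors_{\NumberOfColors}))$ for all inputs. Therefore the capping by $\mathcal{N}$ in condition (C2) does not alter the truth value of $check$, and (C2) with $N = N_0$ becomes exactly the proper-coloring requirement.

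For the forward direction, let $c$ be a proper coloring of $G$ of minimum weight. Define $C[i,a] = \min(\mathcal{N}, |\{v \in V(G) : c(v) = a \land \ell_{e_G}(v) = i\}|)$ for every $i \in \intInterval{1}{\NumberOfLabels}$ and $a \in \ColorSet$. This matrix $C$ satisfies (C1) by construction, and $C[i,a] = 0$ whenever $i \in \overline{\ell(e_G)}$ (no vertex has final label $i$ in that case). Thus $c$ is a $(C, N_0)$-coloring of $G_{e_G}$, and so $\lambda(e_G, C, N_0) \wlesseq \textsc{w}(c)$. For the reverse direction, any $(C, N_0)$-coloring $c$ of $G_{e_G}$ with $C$ as in the statement is a valid coloring of $G$ that satisfies $check$ at every vertex by condition (C2) combined with the observations above, hence it is a proper coloring of $G$ whose weight is at least the minimum weight of a proper coloring. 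Taking the minimum over $C$ on both sides yields the equality claimed. The only subtle point, which I would make sure to write out carefully, is checking that the zero entries of $C$ on $\overline{\ell(e_G)}$ are automatically respected by any valid coloring of $G_{e_G} = G$, so that the minimization is indeed over a set of $C$'s that captures all proper colorings.
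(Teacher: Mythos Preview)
Your proposal is correct and follows essentially the same approach as the paper's proof: both directions are handled by defining $C$ from a given proper coloring via condition~(C1), observing that $N_0$ being zero together with the $\mathcal{N}$-stability assumption makes (C2) equivalent to the proper-coloring condition, and noting that the zero rows of $C$ on $\overline{\ell(e_G)}$ are automatic. The only cosmetic difference is that you frame the argument as a weight-preserving bijection (in fact the identity map, once one notices that (C1) uniquely determines $C$ from $c$), whereas the paper presents it as a pair of inequalities; the underlying computations are the same.
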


%---------------------------------------------

So Lemma~\ref{lem:initialLemma} tells us that in order to solve a color-counting 1-locally checkable problem $\Pi$, i.e. in order to find a minimum weight of a proper coloring, it is sufficient to find the minimum weight among all $(C,N_0)$-colorings of the input graph $G$, where $C$ and $N_0$ are as described above.
Our algorithm is based exactly on this idea, i.e. it determines the minimum among all $\lambda(e_G, C, N_0)$.
This is achieved by traversing the binary rooted tree $T_{e_G}$ in a bottom-up fashion and determining in a recursive way the values $\lambda(e, C, N)$, where $e$ is a subexpression of $e_G$ and $C,N\in \intInterval{0}{\mathcal{N}}^{\NumberOfLabels \times \NumberOfColors}$.
Throughout this recursion, the matrices $C$ and $N$ will intuitively behave in the following way: if we have a proper coloring $c$ of $G$ such that $c|_{V(G_e)}$ is a $(C,N)$-coloring of $G_e$, then
\begin{itemize}
	\item $C[i,a]$ represents the minimum between $\mathcal{N}$ and the number of vertices $v$ in $G_e$ such that $\ell_e(v) = i$ and $c(v) = a$, and
	\item $N[i,a]$ represents the minimum between $\mathcal{N}$ and the number of vertices $u \in V(G)$ with $c(u) = a$ that are upcoming neighbors with respect to $e$ of every vertex $v$ with $\ell_e(v) = i$.
\end{itemize}

%---------------------------------------------------------------------------------------------------
For the next four lemmas, we will assume that we are given matrices $C$ and $N$ in $\intInterval{0}{\mathcal{N}}^{\NumberOfLabels \times \NumberOfColors}$.
We will describe the recursive computation of $\lambda(e,C,N)$ by distinguishing four cases depending on the kind of clique-width operation at the root of the tree $T_e$.

%---------------------------------------------
\begin{lemma}[Creating new vertex: $i(v)$]\label{lemma:Label}
If there exists $a \in L_v$ such that $C[i, a] = 1$ and $C[j, b] = 0$ for all the other entries $[j, b]$ in $C$, and if $check(v, a, N[i, a_1], \ldots, N[i, a_{\NumberOfColors}])$ is true, then
$\lambda(i(v), C, N) = \textsc{w}_{v,a}$.
Otherwise, $\lambda(i(v), C, N) = \Error$.

\end{lemma}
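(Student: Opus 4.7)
The plan is to unfold Definition~\ref{def:CN-coloring} for the trivial graph $G_{i(v)}$, which consists of the single vertex $v$ carrying label $i$ and no edges; once the definition is unpacked, all that remains is a routine case distinction. The starting observation is that any valid coloring $c$ of $G_{i(v)}$ is completely determined by its value at $v$, and that value must lie in $L_v$; I will call it $a$.

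First I would translate condition (C1). In $G_{i(v)}$, the count $|\{u \in V(G_{i(v)}) : c(u) = b \land \ell_{i(v)}(u) = j\}|$ equals $1$ when $(j,b) = (i, a)$ and $0$ otherwise, and since $\mathcal{N} \geq 1$ the outer $\min$ with $\mathcal{N}$ is inert. Hence (C1) is equivalent to the matrix pattern required by the lemma, and it in fact pins down $a$ uniquely from $C$ via the single nonzero entry.

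Second I would translate condition (C2) at the sole vertex $v$. Since $N_{G_{i(v)}}(v) = \emptyset$, the expression $N[\ell_{i(v)}(v), a_j] + |\{u \in N_{G_{i(v)}}(v) : c(u) = a_j\}|$ collapses to $N[i, a_j]$, which already lies in $\intInterval{0}{\mathcal{N}}$; so (C2) reduces exactly to the evaluation $check(v, a, N[i, a_1], \ldots, N[i, a_{\NumberOfColors}]) = \True$ stated in the lemma.

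Combining both translations: if some $a \in L_v$ satisfies both requirements (necessarily unique, by the $C[i,a]=1$ clause), then the only $(C,N)$-coloring of $G_{i(v)}$ is the one assigning $a$ to $v$, whose weight equals $\textsc{w}_{v,a}$; otherwise, either $a \notin L_v$, or the $C$ pattern cannot be matched, or the $check$ fails, and no $(C,N)$-coloring exists, so the minimum of the empty set equals $\Error$ by convention. I do not expect a real obstacle here, as this lemma is the base case of the recursion and the argument is essentially bookkeeping; the one subtle point worth flagging in the write-up is verifying that both $\min$-caps by $\mathcal{N}$ are genuinely inert on a one-vertex graph, which is guaranteed by the standing hypothesis $\mathcal{N} \in \intInterval{1}{|V(G)|}$.
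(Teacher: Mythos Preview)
Your proposal is correct and follows essentially the same approach as the paper's own proof: unpacking Definition~\ref{def:CN-coloring} on the one-vertex graph $G_{i(v)}$, observing that (C1) forces the matrix pattern and (C2) collapses to the stated $check$ evaluation, and concluding that either the unique coloring $c(v)=a$ exists with weight $\textsc{w}_{v,a}$ or none does. If anything, your write-up is slightly more explicit than the paper's in justifying why the $\min(\mathcal{N},\cdot)$ caps are inert.
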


%---------------------------------------------
\begin{lemma}[Disjoint union: $e_1 \oplus e_2$]\label{lemma:DisjUnion}
Let $N_1$ and $N_2$ be two matrices in $\intInterval{0}{\mathcal{N}}^{\NumberOfLabels \times \NumberOfColors}$ such that:
\begin{itemize}
\item $N_1[i, a] = 0$ for every label $i \in \notFinalLabels{e_1}$ and every color $a \in \ColorSet$;
\item $N_1[i, a] = N[i, a]$ for every label $i \notin \notFinalLabels{e_1}$ and every color $a \in \ColorSet$; and
\item $N_2$ is defined analogously with respect to $e_2$.
\end{itemize}

Then
\begin{align*}
\lambda(e_1 \oplus e_2, C, N) =
	&\min\{
		\lambda(e_1, C_1, N_1) \weightsSum \lambda(e_2, C_2, N_2)
		:\\
		& \text{(a)}\; C_1, C_2 \in \intInterval{0}{\mathcal{N}}^{\NumberOfLabels \times \NumberOfColors}\\
		& \text{(b)}\; C_1[i, a] = 0 \text{ for all } i \in \notFinalLabels{e_1}, a \in \ColorSet;\\
		& \text{(c)}\; C_2[i, a] = 0 \text{ for all } i \in \notFinalLabels{e_2}, a \in \ColorSet;\\
		& \text{(d)}\; C[i,a] = \min(\mathcal{N}, C_1[i, a] + C_2[i, a]) \text{ for all } i \in \intInterval{1}{\NumberOfLabels}, a \in \ColorSet
	\}
\end{align*}

\end{lemma}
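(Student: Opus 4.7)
My plan is to establish the equality by proving the two inequalities $\wlesseq$ and its reverse separately between the left-hand side $\lambda(e_1 \oplus e_2, C, N)$ and the right-hand minimum. The guiding structural observation is that $G_{e_1 \oplus e_2}$ is the disjoint union of $G_{e_1}$ and $G_{e_2}$, so no edges run between the two parts and every vertex keeps its label; in particular, for any $v \in V(G_{e_1})$ the neighborhood $N_{G_{e_1 \oplus e_2}}(v)$ coincides with $N_{G_{e_1}}(v)$ and $\ell_{e_1 \oplus e_2}(v) = \ell_{e_1}(v)$ (and symmetrically for $v \in V(G_{e_2})$). This is what makes the transfer from $N$ to $N_1, N_2$ essentially trivial: on rows indexed by labels actually used in $G_{e_1}$ we inherit $N[i,a]$ unchanged into $N_1[i,a]$, while rows of unused labels are irrelevant because no vertex of $G_{e_1}$ refers to them via $\ell_{e_1}$ in condition (C2).

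For the direction showing the left-hand side is $\wlesseq$ the right-hand minimum, I fix $C_1, C_2$ satisfying (a)--(d) and colorings $c_1, c_2$ realizing $\lambda(e_1, C_1, N_1)$ and $\lambda(e_2, C_2, N_2)$ respectively; the case where either value equals $\Error$ is trivial by absorption. Gluing $c := c_1 \cup c_2$ yields a valid coloring of $G_{e_1 \oplus e_2}$, and condition (C1) for $c$ follows from (d) combined with the elementary identity
\begin{equation*}
\min(\mathcal{N}, \min(\mathcal{N}, n_1) + \min(\mathcal{N}, n_2)) = \min(\mathcal{N}, n_1 + n_2),
\end{equation*}
applied to $n_j = |\{v \in V(G_{e_j}) : c_j(v) = a \land \ell_{e_j}(v) = i\}|$. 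Condition (C2) at any $v \in V(G_{e_1})$ reduces to the already verified check condition for $c_1$ at $v$, because the $\upcomingNeighbors_j$-values computed in $(e_1 \oplus e_2, C, N)$ and in $(e_1, C_1, N_1)$ coincide thanks to the neighborhood and label preservation above together with $N[\ell_{e_1}(v), a_j] = N_1[\ell_{e_1}(v), a_j]$; the same argument works symmetrically in $G_{e_2}$. Finally $\textsc{w}(c) = \textsc{w}(c_1) \weightsSum \textsc{w}(c_2)$, so $\lambda(e_1 \oplus e_2, C, N) \wlesseq \lambda(e_1, C_1, N_1) \weightsSum \lambda(e_2, C_2, N_2)$ for this choice of $C_1, C_2$, whence the inequality against the minimum.

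For the reverse direction, I pick a $(C, N)$-coloring $c$ of $G_{e_1 \oplus e_2}$ realizing $\lambda(e_1 \oplus e_2, C, N)$ (if none exists the claim is automatic by $\Error$-absorption), set $c_j := c|_{V(G_{e_j})}$, and define $C_j[i,a] := \min(\mathcal{N}, |\{v \in V(G_{e_j}) : c_j(v) = a \land \ell_{e_j}(v) = i\}|)$. Items (a)--(c) hold by construction, and (d) follows from the same min-identity. The same neighborhood and label preservation argument shows that each $c_j$ is a $(C_j, N_j)$-coloring of $G_{e_j}$, hence $\lambda(e_j, C_j, N_j) \wlesseq \textsc{w}(c_j)$ and, by the monotonicity of $\weightsSum$, $\lambda(e_1, C_1, N_1) \weightsSum \lambda(e_2, C_2, N_2) \wlesseq \textsc{w}(c_1) \weightsSum \textsc{w}(c_2) = \textsc{w}(c) = \lambda(e_1 \oplus e_2, C, N)$, so the minimum on the right is at most $\lambda(e_1 \oplus e_2, C, N)$ as required.

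The main obstacle I expect is the careful bookkeeping around the truncation by $\mathcal{N}$. Since both the $C$-entries and the $N$-entries store counts capped at $\mathcal{N}$, I must verify cleanly that the truncated sum in condition (d) genuinely matches the cardinality count of label-$i$, color-$a$ vertices in $G_{e_1} \cup G_{e_2}$, and that capping the upcoming-neighbor counts causes no mismatch when moving between the parent expression and its two children; both concerns reduce to the min-identity above together with the fact that the $\oplus$ operation introduces no new edges and preserves all labels.
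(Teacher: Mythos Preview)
Your proof is correct and follows essentially the same approach as the paper: both directions are established by restricting/gluing colorings, defining $C_1,C_2$ from the actual counts in each part, and exploiting that the disjoint union preserves labels and neighborhoods so that the $\upcomingNeighbors_j$-values agree via $N_1[\ell_{e_1}(v),a_j]=N[\ell_{e_1}(v),a_j]$. Your explicit statement of the truncation identity $\min(\mathcal{N},\min(\mathcal{N},n_1)+\min(\mathcal{N},n_2))=\min(\mathcal{N},n_1+n_2)$ is exactly what the paper unfolds in its chain of equalities for $C[i,a]$.
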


%---------------------------------------------
\begin{lemma}[Join: $\eta_{i,j}(e)$]\label{lemma:Join}
Let $N_e \in \intInterval{0}{\mathcal{N}}^{\NumberOfLabels \times \NumberOfColors}$ be such that
\begin{itemize}
\item $N_e[i, a] = \min(\mathcal{N}, N[i, a] + C[j, a])$ for every $a \in \ColorSet$;
\item $N_e[j, a] = \min(\mathcal{N}, N[j, a] + C[i, a])$ for every $a \in \ColorSet$;
\item $N_e[h, a] = N[h, a]$ for every $h \in \intInterval{1}{\NumberOfLabels} \setminus \{i,j\}$ and every $a \in \ColorSet$.
\end{itemize}
Then,
$\lambda(\eta_{i,j}(e), C, N) = \lambda(e, C, N_e)$.

\end{lemma}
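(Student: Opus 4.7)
The plan is to establish $\lambda(\eta_{i,j}(e), C, N) = \lambda(e, C, N_e)$ by proving that the set of $(C,N)$-colorings of $G_{\eta_{i,j}(e)}$ coincides exactly with the set of $(C, N_e)$-colorings of $G_e$. Since $V(G_{\eta_{i,j}(e)}) = V(G_e)$ and the weight $\textsc{w}(c)$ depends only on the per-vertex weights $\textsc{w}_{v, c(v)}$ and not on the graph structure, the equality of the two minima will then follow immediately.

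First I would observe that the join operation $\eta_{i,j}$ does not alter the labels of any vertex, so $\ell_{\eta_{i,j}(e)} = \ell_e$. Hence condition (C1) in Definition~\ref{def:CN-coloring} becomes literally the same statement for $(G_{\eta_{i,j}(e)}, C)$ as for $(G_e, C)$, and the validity constraint $c(v) \in L_v$ is also independent of the underlying graph. Thus a valid coloring of one graph is a valid coloring of the other, and the (C1) part of the equivalence is automatic.

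The main content is to verify the equivalence of condition (C2) for each vertex $v$. I would split according to $\ell_e(v)$. If $\ell_e(v) \notin \{i, j\}$ then the join adds no edge incident to $v$, so $N_{G_{\eta_{i,j}(e)}}(v) = N_{G_e}(v)$, and by construction $N_e[\ell_e(v), a] = N[\ell_e(v), a]$ for every color $a$; hence the arguments of $check$ coincide. If $\ell_e(v) = i$, then since the expression is irredundant, $v$ has no neighbor of label $j$ in $G_e$, and the join adds precisely the vertices of label $j$ to $v$'s neighborhood; writing $x_a = |\{u \in V(G_e) : \ell_e(u) = j,\, c(u) = a\}|$ and $y_a = |N_{G_e}(v) \cap c^{-1}(a)|$, this yields $|N_{G_{\eta_{i,j}(e)}}(v) \cap c^{-1}(a)| = x_a + y_a$. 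The symmetric statement holds when $\ell_e(v) = j$.

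The only mildly technical step is then to check the arithmetic identity
\[
\min\bigl(\mathcal{N},\, N[i,a] + x_a + y_a\bigr) \;=\; \min\bigl(\mathcal{N},\, N_e[i,a] + y_a\bigr),
\]
where $N_e[i,a] = \min(\mathcal{N}, N[i,a] + C[j,a])$. By condition (C1) applied in $G_e$ we have $\min(\mathcal{N}, x_a) = C[j,a]$, so the identity follows by splitting on whether $x_a < \mathcal{N}$ or $x_a \geq \mathcal{N}$ and using the elementary fact that $\min(\mathcal{N}, \min(\mathcal{N}, A) + B) = \min(\mathcal{N}, A + B)$ for non-negative integers $A, B$. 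Combined with the symmetric identity for label $j$, this shows that the arguments of $check$ are identical in the two settings for every $v$, completing the bijection between the two families of colorings and proving the lemma.
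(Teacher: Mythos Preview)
Your proposal is correct and follows essentially the same approach as the paper: both argue that the set of $(C,N)$-colorings of $G_{\eta_{i,j}(e)}$ coincides with the set of $(C,N_e)$-colorings of $G_e$ by noting that labels (hence validity and (C1)) are unchanged and then, assuming (C1), verifying the arithmetic identity for the arguments of $check$ at each vertex via irredundancy and the rule $\min(\mathcal{N},\min(\mathcal{N},A)+B)=\min(\mathcal{N},A+B)$. Your write-up is in fact slightly more explicit than the paper's about why (C1) may be invoked inside the (C2) computation, which is a good clarification.
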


%---------------------------------------------
\begin{lemma}[Relabeling: $\rho_{i \rightarrow j}(e)$]\label{lemma:Relabeling}
Let $N_e$ be such that
\begin{itemize}
\item $N_e[i, a] = N[j, a]$ for every $a \in \ColorSet$;
\item $N_e[h, a] = N[h, a]$ for every $h \in \intInterval{1}{\NumberOfLabels} \setminus \{i\}$ and every $a \in \ColorSet$.
\end{itemize}

If $C[i, a] = 0$ for all $a \in \ColorSet$, then
\begin{align*}
\lambda(\rho_{i \rightarrow j}(e), C, N) =
	\min\{ &
		\lambda(e, C_e, N_e)
		:\\
		& \text{(a)}\; C_e \in \intInterval{0}{\mathcal{N}}^{\NumberOfLabels \times \NumberOfColors}\\
		& \text{(b)}\; C[j,a] = \min(\mathcal{N}, C_e[i, a] + C_e[j, a]) \text{ for all } a \in \ColorSet;\\
		& \text{(c)}\; C_e[h, a] = C[h, a] \text{ for all } h \in \intInterval{1}{\NumberOfLabels} \setminus \{i,j\}, a \in \ColorSet
	\}.
\end{align*}
Otherwise, $\lambda(\rho_{i \rightarrow j}(e), C, N) = \Error$.

\end{lemma}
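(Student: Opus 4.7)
The plan is to reduce the lemma to a bijective correspondence between $(C,N)$-colorings of $G_{\rho_{i \rightarrow j}(e)}$ and $(C_e,N_e)$-colorings of $G_e$ satisfying the constraints (a)--(c), exploiting the fact that the operation $\rho_{i \rightarrow j}$ leaves the underlying graph unchanged and only modifies the labels of vertices of label $i$. The $\Error$ case is immediate: if $C[i,a]>0$ for some $a\in\ColorSet$, then by definition of $\rho_{i\rightarrow j}$ no vertex of $G_{\rho_{i\rightarrow j}(e)}$ has final label $i$, so condition $(C1)$ of \cref{def:CN-coloring} cannot be met and hence $\lambda(\rho_{i\rightarrow j}(e),C,N)=\Error$.

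Assume henceforth that $C[i,a]=0$ for every $a\in\ColorSet$. The key observations are that $V(G_{\rho_{i\rightarrow j}(e)}) = V(G_e)$, $E(G_{\rho_{i\rightarrow j}(e)}) = E(G_e)$, and $\ell_{\rho_{i\rightarrow j}(e)}(v) = j$ if $\ell_e(v)\in\{i,j\}$ while $\ell_{\rho_{i\rightarrow j}(e)}(v) = \ell_e(v)$ otherwise. Thus any coloring $c$ of the common vertex set is valid for one graph if and only if it is valid for the other, with identical weight $\textsc{w}(c)$, and the set of upcoming neighbors of any vertex $v$ (and hence the number of its neighbors of each color within the already-built graph) is identical in $G_e$ and $G_{\rho_{i\rightarrow j}(e)}$.

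For the forward direction, starting from a $(C,N)$-coloring $c$ of $G_{\rho_{i\rightarrow j}(e)}$, I define $C_e[h,a]=\min(\mathcal{N},|\setst{v\in V(G_e)}{c(v)=a \text{ and } \ell_e(v)=h}|)$. Condition (a) is immediate. For $h\notin\{i,j\}$ the relabeling does not touch vertices of label $h$, so $C_e[h,a]=C[h,a]$, giving (c). For label $j$, the set of vertices of that label in $G_{\rho_{i\rightarrow j}(e)}$ is the disjoint union of those of label $i$ and those of label $j$ in $G_e$, and using the identity $\min(\mathcal{N},x+y)=\min(\mathcal{N},\min(\mathcal{N},x)+\min(\mathcal{N},y))$ for non-negative integers we recover (b). Condition $(C1)$ for $c$ on $G_e$ with respect to $C_e$ follows directly from the definition of $C_e$. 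For $(C2)$, fix $v$ with $\ell_e(v)=h$: the only thing to check is that the entry of $N_e$ indexed by $h$ equals the entry of $N$ indexed by $\ell_{\rho_{i\rightarrow j}(e)}(v)$ — this is $N_e[i,\cdot]=N[j,\cdot]$ when $h=i$ and $N_e[h,\cdot]=N[h,\cdot]$ when $h\neq i$, both by the definition of $N_e$ in the statement.

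The reverse direction is symmetric: given a $(C_e,N_e)$-coloring $c$ of $G_e$ with $C_e$ satisfying (a)--(c), the same coloring satisfies $(C1)$ for $G_{\rho_{i\rightarrow j}(e)}$ via (b) for label $j$, via (c) for labels outside $\{i,j\}$, and via the assumption $C[i,\cdot]=0$ for label $i$; and $(C2)$ carries over by the same case analysis on $\ell_e(v)$. Since weights agree throughout, taking the minimum on each side yields the equality claimed in the lemma. The main bookkeeping obstacle I anticipate is the careful handling of the clamp $\min(\mathcal{N},\cdot)$ in (b): one must verify that truncating $|\{v:\ell_e(v)=i,\,c(v)=a\}|$ and $|\{v:\ell_e(v)=j,\,c(v)=a\}|$ individually before summing and reclamping gives the same result as clamping the true merged count — which is exactly what the displayed identity above guarantees.
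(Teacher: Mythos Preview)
Your proposal is correct and follows essentially the same approach as the paper: both arguments establish the two inequalities by defining $C_e[h,a]=\min(\mathcal{N},|\{v\in V(G_e):c(v)=a,\ \ell_e(v)=h\}|)$ from a given $(C,N)$-coloring in the forward direction, verify (b)--(c) and the preservation of (C1)--(C2) via the label correspondence and the definition of $N_e$, and then run the reverse direction symmetrically. Your explicit mention of the identity $\min(\mathcal{N},x+y)=\min(\mathcal{N},\min(\mathcal{N},x)+\min(\mathcal{N},y))$ is exactly what the paper uses (implicitly) in its chain of equalities for condition~(b).
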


%---------------------------------------------------------------------------------------------------
\begin{algorithm}[t]
\caption{}\label{alg:mainAlgorithm}

%---------------------------------------------------------------------------------------------------
\BlankLine

%---------------------------------------------------------------------------------------------------
\For{every subexpression $e$ of $e_G$, traversing them in a bottom-up fashion,}{
	\ForAll{matrices $C, N \in \intInterval{0}{\mathcal{N}}^{\NumberOfLabels \times \NumberOfColors}$}{
		Compute $\lambda(e, C, N)$ using Lemmas~\ref{lemma:Label},~\ref{lemma:DisjUnion},~\ref{lemma:Join} or~\ref{lemma:Relabeling}, according to the type of the operation at the root of $T_e$.

		Store the result in memory for future uses.
	}
}

Let $N_0$ be the matrix in $\intInterval{0}{\mathcal{N}}^{\NumberOfLabels \times \NumberOfColors}$ such that all its elements are $0$.

Let $m \leftarrow \Error$

\ForAll{$C \in \intInterval{0}{\mathcal{N}}^{\NumberOfLabels \times \NumberOfColors}$ such that $C[i, a] = 0$ for every $i \in \overline{\ell(e_G)}, a \in \ColorSet$}{
	$m \leftarrow \min(m, \lambda(e_G, C, N_0))$
}

\Return $m$
%---------------------------------------------------------------------------------------------------
\end{algorithm}

%---------------------------------------------------------------------------------------------------

Our algorithm, which takes the same input as a locally checkable problem, plus the number $\mathcal{N}$ and an irredundant clique-width $\NumberOfLabels$-expression $e_G$ of the input graph $G$ together with its binary rooted tree $T_{e_G}$, and outputs the minimum weight of a proper coloring of $G$, is presented in Algorithm~\ref{alg:mainAlgorithm}.
As explained above, we proceed in a bottom-up fashion, i.e. we start with the leaf nodes of $T_{e_G}$, then continue with their parents and so on, and compute each time $\lambda(e, C, N)$ for the corresponding subexpression $e$ (i.e. for the subexpression $e$ corresponding to the node of $T_{e_G}$ that we are currently analyzing) and all possible choices of $C$ and $N$ using the recurrences in Lemmas \ref{lemma:Label}, \ref{lemma:DisjUnion}, \ref{lemma:Join} and \ref{lemma:Relabeling} (see lines 1-3).
Since we are storing the results (see line 4), the number of times we need to compute some value $\lambda(\cdot, \cdot, \cdot)$ is given by the number of subexpressions of $e_G$ times the possible choices for the matrices $C$ and $N$.
Since we have $O(|V(G)|+|E(G)|)$ subexpressions in the given clique-width expression (see Section~\ref{sec:Preliminaries}), and since there exist $(\mathcal{N}+1)^{\NumberOfLabels \NumberOfColors}$ possible matrices $C$, respectively possible matrices $N$, we obtain that line 3 of our algorithm is called at most $O((|V(G)|+|E(G)|) (\mathcal{N}+1)^{2 \NumberOfLabels \NumberOfColors})$ times.
In lines 7-11, we then determine the minimum among all $\lambda(e_G, C, N_0)$, where $N_0$ is the matrix whose elements are all $0$, and $C \in \intInterval{0}{\mathcal{N}}^{\NumberOfLabels \times \NumberOfColors}$ is any matrix such that $C[i, a] = 0$ for every $i\in \overline{\ell(e_G)}$ and every $a \in \ColorSet$.
This can be done in time $O((\mathcal{N}+1)^{\NumberOfLabels \NumberOfColors})$.

It remains to determine the complexity of computing some value $\lambda(\cdot, \cdot, \cdot)$.
This clearly depends on the operation we consider.
Thus, we distinguish 4 cases:
\begin{itemize}
\item \textbf{Creating new vertex:}
We need to go through the entries of $C$, which can be done in time $O(\NumberOfLabels\NumberOfColors)$.
Let us denote by $\tcheck$ the complexity of evaluating the check function.
Hence, we obtain a complexity of $O(\NumberOfLabels\NumberOfColors + \tcheck)$ for this operation.

\item \textbf{Disjoint union:}
We first need to determine $N_1$ and $N_2$, which takes $O(\NumberOfLabels\NumberOfColors)$ time, and then we need to find the minimum weight by going through all possible choices of $C_1$ and $C_2$, which can be done in time $O((\mathcal{N}+1)^{2 \NumberOfLabels \NumberOfColors})$.
This gives us an overall complexity of $O((\mathcal{N}+1)^{2 \NumberOfLabels\NumberOfColors})$ for determining $\lambda(\cdot, \cdot, \cdot)$ for the disjoint union operation.

\item \textbf{Join:}
We simply need to determine the matrix $N_e$, which can be done in $O(\NumberOfLabels\NumberOfColors)$ time.

\item \textbf{Relabeling:}
First, we need to determine the matrix $N_e$, which takes $O(\NumberOfLabels\NumberOfColors)$ time, and then we need to find the minimum weight by considering possible choices of $C_e$ with all rows fixed except two, which clearly takes $O((\mathcal{N}+1)^{2 \NumberOfColors})$.
Thus, overall the complexity of determining $\lambda(\cdot, \cdot, \cdot)$ for the relabeling operation is $O(\NumberOfLabels\NumberOfColors + (\mathcal{N}+1)^{2 \NumberOfColors})$.
\end{itemize}

Now the complexity of computing any $\lambda(e, C, N)$ is bounded by the sum of the complexities of the four cases, for which we obtain $O(\tcheck + (\mathcal{N}+1)^{2 \NumberOfLabels \NumberOfColors})$.
Thus, we obtain the following complexity:
\begin{center}
	$O(
		(|V(G)|+|E(G)|) (\mathcal{N}+1)^{2 \NumberOfLabels \NumberOfColors}
		(\tcheck + (\mathcal{N}+1)^{2 \NumberOfLabels \NumberOfColors})
	).$
\end{center}

\begin{remark}
We can modify the algorithm in order to also obtain the coloring function as an output.
This does not affect the complexity.
\end{remark}

Let us highlight the following main consequences of the previous analysis.
Notice that, by the results in~\cite{approx-cw}, we do not need a clique-width expression as input.

\begin{corollary}
\label{cor:xp}
Consider a color-counting 1-locally checkable problem $\Pi$ with constant number of colors and a check function computable in polynomial time.
Then $\Pi$ is XP parameterized by clique-width.
\end{corollary}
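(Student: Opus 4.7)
The plan is to reduce the statement to a direct consequence of the complexity analysis of Algorithm~\ref{alg:mainAlgorithm}. The key preliminary observation I would make is that every color-counting check function is trivially $\mathcal{N}$-stable for $\mathcal{N} = |V(G)|$: for any vertex $v$ and any color $a_j$, the count $\upcomingNeighbors_j$ is bounded by $|V(G)|-1$, so $\min(\mathcal{N},\upcomingNeighbors_j)=\upcomingNeighbors_j$ and the identity required at the start of Section~\ref{sec:Algorithm} holds vacuously. Consequently Algorithm~\ref{alg:mainAlgorithm} is applicable to an arbitrary color-counting check function with this choice of $\mathcal{N}$, and no extra structural hypothesis on $check$ is needed.

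Starting from a graph $G$, I would first invoke the FPT approximation of clique-width from~\cite{approx-cw} to obtain, in time FPT in $cw(G)$, an irredundant clique-width $\NumberOfLabels$-expression $e_G$ with $\NumberOfLabels$ bounded by some function of $cw(G)$; the associated binary rooted tree $T_{e_G}$ can be built in linear time from $e_G$. I would then run Algorithm~\ref{alg:mainAlgorithm} on $G$ with this $e_G$, $T_{e_G}$ and $\mathcal{N}=|V(G)|$. Correctness is already delivered by Lemma~\ref{lem:initialLemma} together with Lemmas~\ref{lemma:Label}--\ref{lemma:Relabeling}, so nothing new needs to be verified in that direction.

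It only remains to substitute the parameters into the bound
\[
O\!\left((|V(G)|+|E(G)|)\,(\mathcal{N}+1)^{2\NumberOfLabels\NumberOfColors}\bigl(\tcheck + (\mathcal{N}+1)^{2\NumberOfLabels\NumberOfColors}\bigr)\right).
\]
Since $\NumberOfColors$ is constant, $\tcheck$ is polynomial in $|V(G)|$ by hypothesis, and $\mathcal{N}+1=|V(G)|+1$, each of the factors is of the form $|V(G)|^{O(\NumberOfLabels)}$, and the total running time collapses to $|V(G)|^{O(\NumberOfLabels)}$, i.e.\ to $|V(G)|^{O(h(cw(G)))}$ for a suitable function $h$. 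This is exactly the XP form in the parameter $cw(G)$.

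The only delicate point, and the one I would emphasize, is the choice of $\mathcal{N}$: since we do not assume any $d$-stability of $check$, we are forced to take $\mathcal{N}$ linear in $|V(G)|$, and this is precisely what keeps $\NumberOfLabels$ in the exponent of $|V(G)|$ and rules out an FPT bound at this level of generality. The corollary nevertheless claims only XP, so this suffices.
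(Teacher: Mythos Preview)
Your proposal is correct and follows essentially the same approach as the paper: the paper presents this corollary as an immediate consequence of the preceding complexity bound together with the clique-width approximation from~\cite{approx-cw}, and your write-up simply makes explicit the choice $\mathcal{N}=|V(G)|$ (which the paper leaves implicit in the definition of $\mathcal{N}$ at the start of Section~\ref{sec:Algorithm}) and spells out the substitution into the running-time bound.
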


\begin{corollary}
\label{cor:d-stable}
Let $d \in \mathbb{N}$.
If $\Pi$ is a $d$-stable 1-locally checkable problem where the number of colors is $O(\log |V(G)|)$ and the check function can be computed in polynomial time, then $\Pi$ is XP parameterized by clique-width.
\end{corollary}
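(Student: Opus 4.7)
The plan is to derive Corollary~\ref{cor:d-stable} as a direct consequence of the complexity analysis of Algorithm~\ref{alg:mainAlgorithm} carried out earlier in this section, by making two substitutions appropriate to the hypothesis. First, by the assumption of $d$-stability, the parameter $\mathcal{N}$ appearing in the algorithm and in the definition of $(C,N)$-coloring may be taken to equal the constant $d$ rather than $|V(G)|$; this is exactly what the $d$-stability condition allows, since capping the neighbor counts at $d$ preserves the value of $check$. Second, by the assumption $\NumberOfColors = O(\log |V(G)|)$, the quantity $(d+1)^{\NumberOfColors}$ is at most $n^{c\log(d+1)}$ for some constant $c$ and $n = |V(G)|$, which is polynomial in $n$ for fixed $d$.

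Next I would plug these two substitutions into the complexity bound
\[
O\bigl((|V(G)|+|E(G)|)\,(\mathcal{N}+1)^{2\NumberOfLabels\NumberOfColors}\,(\tcheck + (\mathcal{N}+1)^{2\NumberOfLabels\NumberOfColors})\bigr)
\]
obtained at the end of Section~\ref{sec:Algorithm}. With $\mathcal{N} = d$ constant and $\NumberOfColors = O(\log n)$, the term $(d+1)^{2\NumberOfLabels\NumberOfColors}$ becomes $n^{O(\NumberOfLabels)}$; combined with the polynomial time bound on $\tcheck$ assumed in the statement, the whole expression is $n^{O(\NumberOfLabels)}$. Since the exponent depends only on the clique-width parameter $\NumberOfLabels$ (and the constant $d$), this is exactly an XP running time parameterized by clique-width.

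Finally, the algorithm as stated requires an irredundant clique-width $\NumberOfLabels$-expression as part of the input, so to obtain a parameterized algorithm taking only $G$ as input I would invoke the approximation result of~\cite{approx-cw} — as already indicated in the remark immediately preceding Corollary~\ref{cor:xp} — to produce, in FPT time in the exact clique-width of $G$, a clique-width expression using a number of labels bounded by a function of $cw(G)$. Substituting this bound for $\NumberOfLabels$ preserves the XP form of the estimate.

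I do not anticipate a genuine obstacle here: the corollary is essentially a parameter-tracking exercise on top of the complexity analysis already performed for Algorithm~\ref{alg:mainAlgorithm}, and the only mildly delicate point is confirming that the logarithmic growth of $\NumberOfColors$ interacts with the $(\mathcal{N}+1)^{\NumberOfColors}$ factor to yield a polynomial rather than superpolynomial term, which is exactly the identity $(d+1)^{O(\log n)} = n^{O(1)}$ used above.
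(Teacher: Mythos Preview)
Your proposal is correct and follows exactly the route the paper intends: the corollary is stated without proof precisely because it is obtained by substituting $\mathcal{N}=d$ (from $d$-stability) and $\NumberOfColors=O(\log|V(G)|)$ into the displayed running-time bound, together with the clique-width approximation result of~\cite{approx-cw} already invoked before Corollary~\ref{cor:xp}. Your computation $(d+1)^{2\NumberOfLabels\NumberOfColors}=n^{O(\NumberOfLabels)}$ for fixed $d$ is the only step worth writing out, and you handle it correctly.
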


\begin{corollary}
\label{cor:linear}
Let $d \in \mathbb{N}$.
If $\Pi$ is a $d$-stable 1-locally checkable problem where the number of colors is constant and the check function can be computed in constant time, then $\Pi$ is FPT parameterized by clique-width.
Moreover, if an irredundant clique-width $k$-expression is given as input, then it is linear FPT parameterized by $k$.
\end{corollary}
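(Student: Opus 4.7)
The plan is to directly substitute the hypotheses of the corollary into the general complexity bound
$$O\bigl((|V(G)|+|E(G)|)\cdot (\mathcal{N}+1)^{2\NumberOfLabels\NumberOfColors} \cdot (\tcheck + (\mathcal{N}+1)^{2\NumberOfLabels\NumberOfColors})\bigr)$$
established for Algorithm~\ref{alg:mainAlgorithm}. Since the check function is $d$-stable with $d$ a constant, we may take $\mathcal{N} = d$; together with the hypothesis that the number of colors $\NumberOfColors$ is constant, this makes $(\mathcal{N}+1)^{2\NumberOfLabels\NumberOfColors} = (d+1)^{2\NumberOfLabels\NumberOfColors}$ a function of $\NumberOfLabels$ alone, which I will call $g(\NumberOfLabels)$. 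Because $\tcheck$ is constant by hypothesis, the bound collapses to $O((|V(G)|+|E(G)|)\cdot g(\NumberOfLabels)^2)$.

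For the second, ``linear FPT'' statement I would then argue as follows. The quantity $|V(G)|+|E(G)|$ is linear in the size of the input graph, and by the observation at the end of Section~\ref{sec:Preliminaries} it also bounds the number of operations in any given irredundant clique-width $\NumberOfLabels$-expression respecting our relabeling convention. Hence, when such an expression is provided as input, Algorithm~\ref{alg:mainAlgorithm} runs in time linear in the input size, with a multiplicative factor $g(\NumberOfLabels)^2$ depending only on $\NumberOfLabels$. This is precisely what is meant by linear FPT parameterized by $\NumberOfLabels$.

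For the first statement, where no expression is given, I would invoke the FPT approximation algorithm for clique-width cited as~\cite{approx-cw}: on a graph of clique-width $\NumberOfLabels$, it produces in FPT time (in $\NumberOfLabels$) a clique-width expression of width $f(\NumberOfLabels)$ for some function $f$. After transforming it into an irredundant expression respecting the conventions of Section~\ref{sec:Preliminaries} by means of~\cite{irredundantCliquewidth}, the width is still bounded by a function of $\NumberOfLabels$. Applying Algorithm~\ref{alg:mainAlgorithm} to this expression then yields total running time $O((|V(G)|+|E(G)|)\cdot g(f(\NumberOfLabels))^2)$, which is FPT parameterized by clique-width.

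There is no real obstacle here; the only point that must be checked carefully is that the approximate clique-width expression from~\cite{approx-cw} can indeed be converted into the irredundant form with our relabeling convention without inflating its width beyond a function of $\NumberOfLabels$. This follows directly from the cited results but deserves an explicit pointer in the final write-up.
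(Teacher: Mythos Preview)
Your proposal is correct and matches the paper's approach: the paper does not give an explicit proof of this corollary, treating it as an immediate consequence of the displayed complexity bound together with the remark that, by the results in~\cite{approx-cw}, a clique-width expression need not be part of the input. Your write-up simply spells out this substitution (taking $\mathcal{N}=d$, observing that $(d+1)^{2k\NumberOfColors}$ depends only on $k$ when $d$ and $\NumberOfColors$ are constants, and invoking the FPT approximation of clique-width), which is exactly what the paper leaves implicit.
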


Notice that various well-known graph theoretical problems, such as {\sc $k$-Coloring}, {\sc Maximum Independent Set}, as well as {\sc $[k]-$Roman domination} (see Section~\ref{sec:Applications}), are indeed $d$-stable 1-locally checkable problems, for some constant $d$, with constant number of colors.

%----------------------------------
\section{Global size property}\label{sec:SizeGlobProp}

%---------------------------------------------------------------------------------------------------
In this section, we extend the results of Section~\ref{sec:LCP} by considering color-counting 1-locally checkable problems in which it is also required that the number of vertices that receive a given color $a\in \ColorSet$ belongs to a predefined set $\sigma_a$ of non-negative integers.

Let $(Q, \{1\}, \delta, q_0, F)$ be a deterministic finite-state automaton which accepts a string of $t$ consecutive $1$'s if and only if $t \in \sigma_a$.
Note that for all finite sets of non-negative integers, there exists such an automaton (for example, let $m$ be the maximum element of the set, then we set $Q = \{s_0, \ldots, s_{m+1}\}$, $q_0 = s_0$, $F = \setst{s_i}{i \in \sigma}$, $\delta(s_i, 1) = s_{i+1}$ for all $0 \leq i \leq m$ and $\delta(s_{m+1}, 1) = s_{m+1}$).
Let us define the notation $\delta^0(s_i) = s_i$ and $\delta^n(s_i) = \delta(\delta^{n-1}(s_i), 1)$ for every state $s_i \in Q$ and positive integer $n$.

We will now proceed in a similar way as in Section~\ref{sec:Algorithm} but considering additional parameters.
%-------------------
Let us first introduce the relevant notion of \emph{$(C, N, p_1, \ldots, p_m)$-colorings}, which will be defined recursively.
This notion can be used to extend the results of the aforementioned section using different global properties.
Intuitively, if $C, N \in \intInterval{0}{\mathcal{N}}^{\NumberOfLabels \times \NumberOfColors}$ and $p_1, \ldots, p_m$ are parameters such that $(C, N, p_1, \ldots, p_m)$-colorings of $G_e$ are defined, then, for additional parameters $p_{m+1},\ldots,p_{m+m'}$, we define a \emph{$(C,N,p_1,\ldots,p_m,p_{m+1},\ldots,p_{m+m'})$-coloring of $G_e$} as a $(C,N,p_1,\ldots,p_m)$-coloring of $G_e$ such that parameters $p_{m+1},\ldots,p_{m+m'}$ satisfy some predefined property.
In the case of the particular global property mentioned at the beginning of this section, we will only consider two additional parameters.
%-------------------
The first such parameter is a state $s_a \in Q$ and the second parameter is a function $f_a \colon Q \rightarrow \BoolSet$.
We then define a \emph{$(C,N,p_1,\ldots,p_m, s_a,f_a)$-coloring} $c$ of $G_e$ as a $(C,N,p_1,\ldots,p_m)$-coloring of $G_e$ such that $f_a(\delta^{n}(s_a)) = \True$, where $n = |\setst{v \in V(G_e)}{c(v) = a}|$.
Also, in the same spirit as before, we will denote by $\lambda(e, C, N, p_1, \ldots, p_m, s_a, f_a)$ the minimum weight among all $(C, N, p_1, \ldots, p_m, s_a, f_a)$-colorings of $G_e$.
If we want to fix the size of $\mathcal{R}$ color classes, say $a_1, \ldots, a_{\mathcal{R}}$, it suffices to associate an automaton $M_i$ and the corresponding parameters $s_{a_i}$ and $f_{a_i}$ with each color class $a_i$, for $i \in \intInterval{1}{\mathcal{R}}$.

%---------------------------------------------------------------------------------------------------
By providing a lemma explaining how to solve a color-counting 1-locally checkable problem with given global properties by using $(C, N, p_1, \ldots, p_m, s_a, f_a)$-colorings, and then again distinguishing the four clique-width operations, as in the previous section, we can prove that, when the number of colors is constant, this new algorithm is also XP parameterized by clique-width.
Due to space restrictions, their statements are omitted here, but presented in Appendix~\ref{sec:Proofs}.

%----------------------------------
\section{Applications}\label{sec:Applications}

In this section, we provide some examples of problems whose complexity status in graphs of bounded clique-width was unknown, and for each of which the application of our framework yields a first polynomial-time algorithm in this class of graphs.

%---------------------------------------------------------------------------------------------------
\subsection{(Global) $[k]-$Roman domination}\label{edgeDeletionChrom}

The \textsc{$[k]-$Roman domination} problem was first defined in~\cite{tripleRoman} as a generalization of Roman and double Roman domination~\cite{RomanDomination,doubleRoman}. 
Let $k \geq 1$ be an integer.
A \emph{$[k]$-Roman dominating function} on a graph $G$ is a function $f \colon V(G) \to \intInterval{0}{k+1}$
having the property that
if $f(v) < k$ then
$\sum_{u \in N_G[v]} f(u) \geq |AN_G^f(v)| + k$,
where $AN_G^f(v) = \{u \in N_G(v) : f(u) \geq 1\}$ (this set is called the \emph{active neighborhood of $v$}).
The \emph{weight} of a $[k]$-Roman dominating function $f$ is $\sum_{v \in V(G)} f(v)$, and the minimum weight of a $[k]$-Roman dominating function on $G$ is the \emph{$[k]$-Roman domination number} of $G$, denoted by $\gamma_{[kR]}(G)$.
The problem consists in computing the $[k]$-Roman domination number of a given graph.

In~\cite{LCPTreewidth}, this problem was shown to be solvable in linear time in graphs of bounded treewidth.
In their model, the number of colors is a constant and the check function is actually $(k+1)$-stable.
We can express it in the following way:
\begin{itemize}
\item $\ColorSet=\intInterval{0}{k+1}$
and $L_{v} = \intInterval{0}{k+1}$ for all $v \in V(G)$;

\item $(\WeightSet, \wlesseq, \weightsSum)=(\mathbb{N}\cup \{+\infty\},\leq, +)$
and $\textsc{w}_{v,a} = a$ for all $v \in V(G), a \in L_{v}$;

\item $check(v, a, \upcomingNeighbors_0, \ldots, \upcomingNeighbors_{k+1}) = \left(a + \sum_{j=0}^{k+1} j \upcomingNeighbors_j \geq k + \sum_{j=1}^{k+1} \upcomingNeighbors_j\right)$.
\end{itemize}

Then, by Corollary \ref{cor:linear}, this problem is FPT parameterized by clique-width (and linear FPT when a suitable clique-width expression is given).

In~\cite{GlobRomIntrod}, the authors introduced a variant of this problem, called \textsc{global Roman domination}.
This problem was later extended to \textsc{global double Roman domination}~\cite{GlobDoubleRomDom} and \textsc{global triple Roman domination}~\cite{GlobalKRomDom}.
The definition of these problems can be naturally generalized as follows.
A \emph{global $[k]-$Roman dominating function} on a graph $G$ is a $[k]-$Roman dominating function in both $G$ and $\overline{G}$.
The \textsc{global $[k]-$Roman domination} problem consists in computing the minimum weight of a global $[k]-$Roman dominating function of a graph.

In order to show that this problem is XP parameterized by clique-width, we first define an auxiliary problem.

\medskip

\defProbEnglish
{Specified size global $k-$Roman domination}
{A graph $G$ and $k+2$ non-negative integers $\SizeOfColorClass_0, \ldots, \SizeOfColorClass_{k+1}$ such that $\sum_{i=0}^{k+1} \SizeOfColorClass_i = |V(G)|$.}
{Does $G$ admit a global $[k]-$Roman dominating function $f$ such that, for all $i \in \intInterval{0}{k+1}$, $\SizeOfColorClass_i$ equals the number of vertices $v \in V(G)$ with $f(v) = i$?}

This last problem can be modeled as a color-counting 1-locally checkable problem with global properties:
\begin{itemize}
\item $\ColorSet = \intInterval{0}{k+1}$
and $L_{v} = \intInterval{0}{k+1}$ for all $v \in V(G)$;

\item $(\WeightSet, \wlesseq, \weightsSum) = (\mathbb{N}\cup \{+\infty\},\leq, +)$
and $\textsc{w}_{v, a} = a$ for all $v \in V(G), a \in L_{v}$;

\item $check(v, a, \upcomingNeighbors_0, \ldots, \upcomingNeighbors_{k+1}) = (a + \sum_{j=1}^{k+1}(j-1)\upcomingNeighbors_j \geq k) \land (\sum_{j=1}^{k+1}(j-1)(\SizeOfColorClass_j - \upcomingNeighbors_j) \geq k)$; % for all $v \in V(G), a \in L_{v}$;

\item for all $a \in \ColorSet$, we ask for the size of the color class of $a$ to belong to $\{\SizeOfColorClass_a\}$.
\end{itemize}

Finally, to solve \textsc{global $[k]-$Roman domination} on graphs of bounded clique-width, we successively iterate over the feasible combinations of values $\SizeOfColorClass_0, \ldots, \SizeOfColorClass_{k+1}$ such that $\sum_{i=0}^{k+1} \SizeOfColorClass_i = |V(G)|$ and $\SizeOfColorClass_i \geq 0$ for all $i \in \intInterval{0}{k+1}$.
Notice that the number of such combinations is no more than $(|V(G)|+1)^{k+2}$.
For each combination, we solve \textsc{Specified size global $k-$Roman domination}, and we retain the solution of minimum weight.

%---------------------------------------------------------------------------------------------------

\subsection{$k$-community, Max PDS and other variants}\label{k-community}

The notion of \textit{community structure} was first introduced in \cite{communityIntro}, as a partition $\Pi = \{C_1, \ldots, C_k\}$, with $k \geq 2$, of the set of vertices of a graph into so called \textit{communities}, such that for each $i \in \intInterval{1}{k}$ we have $|C_i| \geq 2$ and, for each vertex $v \in C_i$ and each community $C_j \neq C_i$, $\frac{|N_G(v) \cap C_i|}{|C_i|-1}\geq \frac{|N_G(v) \cap C_j|}{|C_j|}\,.$
Finding a community structure in any graph $G$ can be done in polynomial time (see~\cite{communityIntro}).
However, the number of communities $k$ in the obtained community structure can be any value between 2 and $\frac{|V(G)|}{2}$, and the algorithm does not apply when we want to impose the number of communities.
The \textsc{2-community} problem was introduced in~\cite{k-Community} as the problem of deciding whether a given connected graph has a \textit{2-community structure}, i.e. a community structure with 2 communities.
This can be naturally generalized to the \textsc{$k$-community} problem, for any fixed $k$, as the problem of deciding whether a given connected graph has a community structure with $k$ communities.
The complexity status of \textsc{2-community} is still unknown, and only a few graph classes are known to admit polynomial time algorithms for this problem (for instance, graphs of maximum degree 3 and graphs of minimum degree $|V(G)| - 3$~\cite{k-Community}).

We show here that \textsc{$k$-community} is XP parameterized by clique-width.
Our approach is similar to the one for \textsc{global $[k]-$Roman domination}, in the sense that we define a variant of the problem where we require a certain size of each community, to which we reduce \textsc{$k$-community}.

\medskip

\defProbEnglish
	{Specified size $k$-community}
	{A graph $G$ and $k$ integers $\SizeOfColorClass_1, \ldots, \SizeOfColorClass_k \geq 2$, such that $\sum_{i=1}^k \SizeOfColorClass_i = |V(G)|$.}
	{Does $G$ admit a $k$-community structure $\Pi = \{C_1, \ldots, C_k\}$ such that $\vert C_i \vert = \SizeOfColorClass_i$ for all $i \in \intInterval{1}{k}$?}

%---------------------------------------------------------------------------------------------------
The \textsc{Specified size $k$-community} problem can be modeled as a color-counting 1-locally checkable problem with global properties.
Notice that since it is a decision problem, we only need two values for the weight set.
\begin{itemize}
\item $\ColorSet = \intInterval{1}{k}$
and $L_{v} = \intInterval{1}{k}$ for all $v \in V(G)$;

\item $(\WeightSet, \wlesseq, \weightsSum) = (\{0,1\}, \leq, \max)$
and $\textsc{w}_{v, a} = 0$ for all $v \in V(G), a \in L_v$;

\item $check(v, a, \upcomingNeighbors_1,\ldots, \upcomingNeighbors_{\NumberOfColors}) = \left(\forallFormula{b \in \intInterval{1}{k}}{\frac{\upcomingNeighbors_{a}}{\SizeOfColorClass_{a} - 1} \geq \frac{\upcomingNeighbors_b}{\SizeOfColorClass_b}}\right)$;

\item for all $a \in \ColorSet$, we ask for the size of the color class of $a$ to belong to $\{\SizeOfColorClass_a\}$.
\end{itemize}

%---------------------------------------------------------------------------------------------------
Then, \textsc{$k$-community} can be solved by successively iterating over the feasible combinations of values $\SizeOfColorClass_1, \ldots, \SizeOfColorClass_k$ such that $\sum_{i=1}^{k} \SizeOfColorClass_i = |V(G)|$ and $\SizeOfColorClass_i \geq 2$ for all $i \in \intInterval{1}{k}$, and for each of the combinations solving \textsc{Specified size $k$-community}.

%---------------------------------------------------------------------------------------------------
Note that \textsc{Balanced $k$-community}, i.e. the problem of finding a $k$-community structure with all parts having the same size, is equivalent to \textsc{Specified size $k$-community} with $\SizeOfColorClass_i = \SizeOfColorClass_j$, for all $i,j \in \intInterval{1}{k}$.
Hence, it is also XP parameterized by clique-width.
In~\cite{balK-comNPC}, it was shown that this problem is NP-complete in general, and in~\cite{k-Community} it was pointed out to be polynomially solvable in graphs of bounded treewidth.
It is not difficult to see that the problem \textsc{Weak $k$-community}, defined in~\cite{k-Community}, can also be solved by slightly modifying the above check function.

A closely related problem is the \textsc{Maximum Proportionally Dense Subgraph (Max PDS)} problem, originally defined in~\cite{bazgan-proportionally-2019}.
Let $G$ be a graph and $S \subset V(G)$, such that $2 \leq |S| < |V(G)|$.
We say that the induced subgraph $G[S]$ is a \emph{proportionally dense subgraph (PDS)} if for every $v\in S$, we have
$\frac{|N_G(v) \cap S|}{|S|-1}\geq \frac{|N_G(v) \cap \overline{S}|}{|\overline{S}|}$.
Then, the \textsc{Max PDS} problem consists in finding a proportionally dense subgraph in $G$ of maximum size.
The authors of~\cite{bazgan-proportionally-2019} showed that the \textsc{Max PDS} problem is NP-hard, even when restricted to split graphs or bipartite graphs, and that it can be solved in linear time in cubic Hamiltonian graphs.

By proceeding in a similar way as before, where in the associated auxiliary problem we have only two colors, $\textsc{s}$ and $\overline{\textsc{s}}$, and the check function is given by $check(v, a, \upcomingNeighbors_0, \upcomingNeighbors_1) = \left(a = \textsc{s} \Rightarrow {\frac{\upcomingNeighbors_{\textsc{s}}}{\SizeOfColorClass_{\textsc{s}} - 1} \geq \frac{\upcomingNeighbors_{\overline{\textsc{s}}}}{\SizeOfColorClass_{\overline{\textsc{s}}}}}\right)$,
we can show that \textsc{Max PDS} is XP parameterized by clique-width.

Another variation defined in~\cite{bazgan-proportionally-2019} is the \textsc{PDS Extension} problem, which asks whether there exists a proportionally dense subgraph $G[S]$ such that $U \subset S$, for some $U \subset V(G)$ given as an input.
It was shown in~\cite{bazgan-proportionally-2019} that the \textsc{PDS Extension} problem is NP-complete, and no polynomial time algorithms were known for any graph class.
We can show that this problem is also XP parameterized by clique-width, by proceeding almost exactly as explained above, where the only change is that we now set $L_v = \{\textsc{s}\}$ for all $v \in U$.

Given a graph $G$ and a real number $\gamma \in (0, 1]$, a \emph{degree-based $\gamma$-quasi-clique} is defined as a subset $S \subseteq V(G)$ such that the degree of any vertex in $G[S]$ is at least $\gamma (|S| - 1)$, that is, $\frac{|N_G(v) \cap S|}{|S|-1}\geq \gamma$.
The \textsc{maximum degree-based $\gamma$-quasi-clique} problem consists in finding a degree-based $\gamma$-quasi-clique of maximum cardinality in a graph.
In~\cite{max-degree-based-quasi-clique}, it was shown that this problem is NP-hard for any fixed $\gamma$.
Using the same techniques as for \textsc{Max-PDS} (only slightly modifying the check function), we obtain that \textsc{maximum degree-based $\gamma$-quasi-clique} is XP parameterized by clique-width.

%----------------------------------
\bibliographystyle{abbrv}
\bibliography{Z-References}

%----------------------------------
\newpage
\appendix
\gdef\thesection{\Alph{section}}
\makeatletter
\renewcommand\@seccntformat[1]{Appendix \csname the#1\endcsname.\hspace{0.5em}}
\makeatother
\section{Omitted proofs and examples}
\label{sec:Proofs}

In this section, we include the proofs of the lemmas presented in Section~\ref{sec:Algorithm}, as well as the lemmas omitted from~\ref{sec:SizeGlobProp} and their proofs.
We also include examples of well known problems modeled as locally checkable problems.

%--------------------------------------------------------------------------------------------------------------------------------------------------
% Section 3 (Locally checkable problems)
\subsection{Examples of color-counting 1-locally checkable problems}

\begin{example}
Consider the {\sc $k$-Coloring} problem.
This problem can be seen as a color-counting $1$-locally checkable problem with the following characteristics:
\begin{itemize}
	\item $\ColorSet = \intInterval{1}{k}$
	and $L_{v} = \intInterval{1}{k}$ for all $v \in V(G)$;

	\item $(\WeightSet, \wlesseq, \weightsSum) = (\{0,1\}, \leq, \max)$
	and $\textsc{w}_{v, a} = 0$ for all $v \in V(G), a \in L_v$;

	\item $check(v, a, \upcomingNeighbors_1, \ldots, \upcomingNeighbors_{k}) = (\upcomingNeighbors_a = 0)$.
\end{itemize}
\end{example}

%---------------------------------------------------------------------------------------------------
\begin{example}
The {\sc Maximum Independent Set} problem can also be modeled as a color-counting $1$-locally checkable problem:
\begin{itemize}
	\item $\ColorSet = \{0,1\}$
	and $L_{v} = \{0,1\}$ for all $v \in V(G)$;

	\item ($\WeightSet, \wlesseq, \weightsSum) = (\mathbb{N}\cup \{-\infty\}, \geq, +)$
	and $\textsc{w}_{v, a} = a$ for all $v \in V(G), a \in L_v$;

	\item $check(v, a, \upcomingNeighbors_0, \upcomingNeighbors_1) = (a = 0 \lor \upcomingNeighbors_1 = 0)$.
\end{itemize}
\end{example}

%---------------------------------------------------------------------------------------------------
\begin{example}
The {\sc Minimum Odd Dominating Set} problem can as well be modeled as a color-counting $1$-locally checkable problem, as follows:
\begin{itemize}
	\item $\ColorSet = \{0,1\}$
	and $L_{v} = \{0,1\}$ for all $v \in V(G)$;

	\item ($\WeightSet, \wlesseq, \weightsSum) = (\mathbb{N}\cup \{+\infty\}, \leq, +)$
	and $\textsc{w}_{v, a} = a$ for all $v \in V(G), a \in L_v$;

	\item $check(v, a, \upcomingNeighbors_0, \upcomingNeighbors_1) = (a + \upcomingNeighbors_1 \equiv 1 \mod 2)$.
\end{itemize}
\end{example}

Notice that the check functions of {\sc $k$-Coloring} and {\sc Maximum Independent Set} are both 1-stable, but the check function of {\sc Minimum Odd Dominating Set} is not $d$-stable for any constant $d$.

%--------------------------------------------------------------------------------------------------------------------------------------------------
% Section 4 (Algorithm)
\subsection{Omitted proofs of Section~\ref{sec:Algorithm}}

%---------------------------------------------------------------------------------------------------
% Proofs

%---------------------------------------------
\noindent\textbf{Lemma \ref{lem:initialLemma}.}
\emph{}
\begin{proof}
We will show that for every proper coloring $c$ of $G$ there exists a matrix $C \in \intInterval{0}{\mathcal{N}}^{\NumberOfLabels \times \NumberOfColors}$ such that $C[i, a] = 0$ for every $i \in \overline{\ell(e_G)}$ and every $a \in \ColorSet$, and such that $\textsc{w}(c) \geq \lambda(e_G, C, N_0)$.
On the other hand, we will then show that for every matrix $C \in \intInterval{0}{\mathcal{N}}^{\NumberOfLabels \times \NumberOfColors}$ such that $C[i, a] = 0$ for every $i\in \overline{\ell(e_G)}$ and every $a \in \ColorSet$, and such that $\lambda(e_G, C, N_0) \neq \Error$, there exists a proper coloring $c$ of $G$ such that $\textsc{w}(c) = \lambda(e_G, C, N_0)$.

Suppose we have a proper coloring $c$ of $G$.
Let $C \in \intInterval{0}{\mathcal{N}}^{\NumberOfLabels \times \NumberOfColors}$ be the matrix such that $C[i,a] = \min(\mathcal{N}, |\setst{v \in V(G)}{c(v) = a \land \ell_e(v) = i}|)$ for all $i \in \intInterval{1}{\NumberOfLabels}$ and all $a \in \ColorSet$.
Clearly, $C[i, a] = 0$ for every $i \in \overline{\ell(e_G)}$ and every $a \in \ColorSet$.
Also, for every $v \in V(G)$ we have that $check(v, c(v), \upcomingNeighbors_1, \ldots, \upcomingNeighbors_{\NumberOfColors})$ is true, where $\upcomingNeighbors_j = \min(\mathcal{N}, N_0[\ell_e(v), a_j] + |\setst{u \in N_{G}(v)}{c(u) = a_j}|)$ for every $j \in \intInterval{1}{\NumberOfColors}$, because $N_0[\ell_e(v), a_j] = 0$ for every $j \in \intInterval{1}{\NumberOfColors}$ and $c$ is a proper coloring of $G$.
Therefore, $c$ is a $(C, N_0)$-coloring of $G$ and so $\textsc{w}(c) \geq \lambda(e_G, C, N_0)$.

Now suppose we have a matrix $C \in \intInterval{0}{\mathcal{N}}^{\NumberOfLabels \times \NumberOfColors}$ such that $C[i, a] = 0$ for every $i\in \overline{\ell(e_G)}$ and every $a \in \ColorSet$, and such that $\lambda(e_G, C, N_0) \neq \Error$. 
Let $c$ be a $(C, N_0)$-coloring of $G$ of minimum weight (notice that at least one such $c$ exists, since $\lambda(e_G, C, N_0) \neq \Error$). 
We will prove that $c$ is a proper coloring of $G$.
By definition of a $(C, N_0)$-coloring, $c$ is a valid coloring, so it only remains to prove that $check(G, v, c|_{N_G[v]})$ is true for every $v \in V(G)$.
We know that for every $v \in V(G)$, we have that $check(v, c(v), \upcomingNeighbors_1, \ldots, \upcomingNeighbors_{\NumberOfColors})$ is true, where $\upcomingNeighbors_j = \min(\mathcal{N}, N_0[\ell_e(v), a_j] + |\setst{u \in N_{G}(v)}{c(u) = a_j}|)$ for every $j \in \intInterval{1}{\NumberOfColors}$.
Since $N_0[\ell_e(v), a_j] = 0$ for every $v \in V(G)$ and $j \in \intInterval{1}{\NumberOfColors}$, we have that $check(G, v, c|_{N_G[v]}) = check(v, c(v), \upcomingNeighbors_1, \ldots, \upcomingNeighbors_{\NumberOfColors}) = \True$, where $\upcomingNeighbors_j = \min(\mathcal{N}, |\setst{u \in N_G(v)}{c(u) = a_j}|)$ for all $j \in \intInterval{1}{\NumberOfColors}$.
\end{proof}

%---------------------------------------------
\noindent\textbf{Lemma \ref{lemma:Label} (Creating new vertex: $i(v)$).}
\emph{}
\begin{proof}
First notice that $G_{i(v)}$ is the graph consisting of a single vertex $v$ with label $i$.
Therefore, if $C$ and $N$ have the above properties (i.e. there exists $a \in L_v$ such that $C[i, a] = 1$ and $C[j, b] = 0$ for all the other entries $[j, b]$ in $C$, and $check(v, a, N[i, a_1], \ldots, N[i, a_{\NumberOfColors}])$ is true), there is exactly one $(C,N)$-coloring $c$ of $G_{i(v)}$, defined by $c(v) = a$.
Indeed, since $a \in L_v$, it follows that $c$ is a valid coloring.
Moreover, conditions (C1) and (C2) are trivially satisfied.
Then, $\lambda(i(v), C, N) = \textsc{w}(c) = \textsc{w}_{v,a}$.

On the other hand, if $C$ does not have exactly one nonzero entry, or if it is not in row $i$, or if it is in a column $a\not\in L_v$, or if this entry is not equal to 1, then no valid coloring satisfying condition (C1) exists.
If for this unique possible choice of color $a$, $check(v, a, N[i, a_1], \ldots, N[i, a_{\NumberOfColors}])$ is false, then no $(C,N)$-coloring of $G_{i(v)}$ exists either.
Therefore $\lambda(i(v), C, N) = \Error$.
\end{proof}

%---------------------------------------------
\noindent\textbf{Lemma \ref{lemma:DisjUnion} (Disjoint union: $e_1 \oplus e_2$).}
\emph{}
\begin{proof}
Let $\alpha=\min\{\lambda(e_1, C_1, N_1) \weightsSum \lambda(e_2, C_2, N_2) : (a), (b), (c), (d)$ are satisfied$\}$.
We will first prove that $\lambda(e_1 \oplus e_2, C, N) \geq \alpha$.
If $\lambda(e_1 \oplus e_2, C, N) = \Error$, then we are done.
So assume that $\lambda(e_1 \oplus e_2, C, N) \neq \Error$ and let $c$ be a $(C,N)$-coloring of $G_{e_1 \oplus e_2}$ whose weight equals $\lambda(e_1 \oplus e_2,C,N)$.
We need to show that there exist $C_1$ and $C_2$ in $\intInterval{0}{\mathcal{N}}^{\NumberOfLabels \times \NumberOfColors}$ satisfying $(b),(c),(d)$ and such that the weight of $c$ is at least $\lambda(e_1, C_1, N_1) \weightsSum \lambda(e_2, C_2, N_2)$.
Let $c_1 = c|_{V(G_{e_1})}$ and $c_2 = c|_{V(G_{e_2})}$. Then, we define $C_1[i,a]=\min(\mathcal{N}, |\{v \in V(G_{e_1}) : c_1(v) = a \land \ell_{e_1}(v) = i\}|)$ for any label $i\in \intInterval{1}{\NumberOfLabels}$ and color $a\in \ColorSet$, and similarly for $C_2$ with respect to $e_2$.
Consequently, conditions $(b)$ and $(c)$ are satisfied: if $i \in \notFinalLabels{e_1}$, then $\{v \in V(G_{e_1}) : \ell_{e_1}(v) = i\}=\emptyset$, thus $C_1[i,a] = 0$, and similarly for $C_2$.
Condition $(d)$ is also satisfied because $c$ is a $(C,N)$-coloring of $G_{e_1 \oplus e_2}$, $V(G_{e_1})$ and $V(G_{e_2})$ are disjoint, $\ell_{e_1}(v)=\ell_{e_1 \oplus e_2}(v)$ for every $v \in V(G_{e_1})$ and $\ell_{e_2}(v)=\ell_{e_1 \oplus e_2}(v)$ for every $v \in V(G_{e_2})$.
We now show that $c_1$ is a $(C_1, N_1)$-coloring of $G_{e_1}$.
Condition (C1) is trivially satisfied by the definition of $C_1$.
To show that condition (C2) is satisfied, we will show that $check(v, c_1(v), \upcomingNeighbors'_1, \ldots, \upcomingNeighbors'_\NumberOfColors)$ is true for every vertex $v \in V(G_{e_1})$, where $\upcomingNeighbors'_j = \min(\mathcal{N}, N_1[\ell_{e_1}(v), a_j] + |\setst{u \in N_{G_{e_1}}(v)}{c_1(u) = a_j}|)$ for every $j \in \intInterval{1}{\NumberOfColors}$.
Since $c$ is a $(C,N)$-coloring of $G_{e_1 \oplus e_2}$, we have that $check(v, c(v), \upcomingNeighbors_1, \ldots, \upcomingNeighbors_\NumberOfColors)$ is true for every $v \in V(G_{e_1 \oplus e_2})$, where $\upcomingNeighbors_j = \min(\mathcal{N}, N[\ell_{e_1 \oplus e_2}(v), a_j] + |\setst{u \in N_{G_{e_1 \oplus e_2}}(v)}{c(u) = a_j}|)$ for every $j \in \intInterval{1}{\NumberOfColors}$.
By definition of $N_1$ and since $\ell_{e_1}(v) = \ell_{e_1 \oplus e_2}(v)$ for every $v \in V(G_{e_1})$, we have $N_1[\ell_{e_1}(v), a_j]=N[\ell_{e_1 \oplus e_2}(v), a_j]$ for every $v \in V(G_{e_1})$ and every $j \in \intInterval{1}{\NumberOfColors}$.
Also, $N_{G_{e_1}}(v) = N_{G_{e_1 \oplus e_2}}(v)$ for every $v \in V(G_{e_1})$ and  $c_1 = c|_{V(G_{e_1})}$, so $|\setst{u \in N_{G_{e_1}}(v)}{c_1(u) = a_j}| = |\setst{u \in N_{G_{e_1 \oplus e_2}}(v)}{c(u) = a_j}|$ for every $v \in V(G_{e_1})$ and every $j \in \intInterval{1}{\NumberOfColors}$.
Therefore $\upcomingNeighbors'_j = \upcomingNeighbors_j$ for every $j \in \intInterval{1}{\NumberOfColors}$ and hence, $check(v, c_1(v), \upcomingNeighbors'_1, \ldots, \upcomingNeighbors'_\NumberOfColors)$ is true for every $v \in V(G_{e_1})$.
Using similar arguments, we can show that $c_2$ is a $(C_2, N_2)$-coloring of $G_{e_2}$.
Moreover, $\textsc{w}(c) = \textsc{w}(c_1) \weightsSum \textsc{w}(c_2)$ by definition of $c_1$ and $c_2$, and consequently, $\lambda(e_1 \oplus e_2, C,N) = \textsc{w}(c) = \textsc{w}(c_1) \weightsSum \textsc{w}(c_2) \geq \lambda(e_1, C_1, N_1) \weightsSum \lambda(e_2, C_2, N_2) \geq \alpha$.

%---------------------------------------------
Let us show now that $\lambda(e_1 \oplus e_2, C, N) \leq \alpha$.
Consider two matrices $C_1$ and $C_2$ in $\intInterval{0}{\mathcal{N}}^{\NumberOfLabels \times \NumberOfColors}$ satisfying $(b),(c),(d)$.
If $\lambda(e_1, C_1, N_1) = \Error$ or $\lambda(e_2, C_2, N_2) = \Error$, then we are done.
Otherwise, we are going to construct a $(C, N)$-coloring $c$ of $G_{e_1 \oplus e_2}$ such that $\textsc{w}(c) = \lambda(e_1, C_1, N_1) \weightsSum \lambda(e_2, C_2, N_2)$.
Let $c_1$ be a $(C_1,N_1)$-coloring of $G_{e_1}$ whose weight is $\lambda(e_1, C_1, N_1)$ and $c_2$ be a $(C_2,N_2)$-coloring of $G_{e_2}$ whose weight is $\lambda(e_2, C_2, N_2)$.
Let $c = c_1 \cup c_2$ (note that $c$ is well defined because $V(G_{e_1})$ and $V(G_{e_2})$ are disjoint sets).
Clearly, $c$ is a valid coloring and $\textsc{w}(c) = \textsc{w}(c_1) \weightsSum \textsc{w}(c_2)$.
We now show that $c$ is a $(C, N)$-coloring of $G_{e_1 \oplus e_2}$.
We start with condition (C1).
Consider $i \in \intInterval{1}{\NumberOfLabels}$ and $a \in \ColorSet$.
Since $c_1$ is a $(C_1, N_1)$-coloring of $G_{e_1}$ and $c_2$ is a $(C_2, N_2)$-coloring of $G_{e_2}$, we have $C_1[i,a] = \min(\mathcal{N}, |\{v \in V(G_{e_1}) : c_1(v) = a \land \ell_{e_1}(v) = i\}|)$ and $C_2[i,a] = \min(\mathcal{N}, |\{v \in V(G_{e_2}) : c_2(v) = a \land \ell_{e_2}(v) = i\}|))$.
Then,
\begin{align*}
C[i,a] =	&	\min(\mathcal{N}, C_1[i,a] + C_2[i,a])	\\
	=	&	\min(\mathcal{N}, \min(\mathcal{N}, |\{v \in V(G_{e_1}) : c_1(v) = a \land \ell_{e_1}(v) = i\}|) +  \\
	         &     \min(\mathcal{N}, |\{v \in V(G_{e_2}): c_2(v) = a \land \ell_{e_2}(v) = i\}|))	\\
	=	&	\min(\mathcal{N}, |\{v \in V(G_{e_1}) : c_1(v) = a \land \ell_{e_1}(v) = i\}|+\\
	        &       |\{v \in V(G_{e_2}) : c_2(v) = a \land \ell_{e_2}(v) = i\}|)	\\
	=	&	\min(\mathcal{N}, |\{v \in V(G_{e_1 \oplus e_2}) : c(v) = a \land \ell_{e_1 \oplus e_2}(v) = i\}|).
\end{align*}
The last equality is simply due to the definition of $c$ and to the facts that $V(G_{e_1 \oplus e_2})=V(G_{e_1})\cup V(G_{e_2})$ and for every $v\in G_{e_1}$ we have that $\ell_{e_1}(v)=\ell_{e_1 \oplus e_2}(v)$ and for every $v\in G_{e_2}$ we have that $\ell_{e_2}(v)=\ell_{e_1 \oplus e_2}(v)$.
Let us focus on (C2) now. Consider $v\in G_{e_1 \oplus e_2}$. Assume, without loss of generality, that $v\in V(G_{e_1})$.
We will prove that $check(v, c(v), \upcomingNeighbors_1, \ldots, \upcomingNeighbors_\NumberOfColors)$ is true, where $\upcomingNeighbors_j = \min(\mathcal{N}, N[\ell_{e_1 \oplus e_2}(v), a_j] + |\setst{u \in N_{G_{e_1 \oplus e_2}}(v)}{c(u) = a_j}|)$ for every $j \in \intInterval{1}{\NumberOfColors}$.
Since $c_1$ is a $(C_1, N_1)$-coloring of $G_{e_1}$, we know that $check(v, c_1(v), \upcomingNeighbors'_1, \ldots, \upcomingNeighbors'_\NumberOfColors)$ is true, where $\upcomingNeighbors'_j = \min(\mathcal{N}, N_1[\ell_{e_1}(v), a_j] + |\setst{u \in N_{G_{e_1}}(v)}{c_1(u) = a_j}|)$ for every $j \in \intInterval{1}{\NumberOfColors}$.
Using similar arguments as above, we obtain again that $\upcomingNeighbors'_j = \upcomingNeighbors_j$ for every $j \in \intInterval{1}{\NumberOfColors}$.
Due to the definition of $c$, we then conclude that $check(v, c(v), \upcomingNeighbors_1, \ldots, \upcomingNeighbors_\NumberOfColors)$ is true for every $v \in V(G)$, and so (C2) is satisfied.
Thus, $c$ is a $(C, N)$-coloring of $G_{e_1 \oplus e_2}$.
\end{proof}

%---------------------------------------------
\noindent\textbf{Lemma \ref{lemma:Join} (Join: $\eta_{i,j}(e)$).}
\emph{}
\begin{proof}
First of all, since the labelings of the vertices do not change between $G_{\eta_{i,j}(e)}$ and $G_{e}$, we simply use $\ell$ to denote both labelings $\ell_e$ and $\ell_{\eta_{i,j}(e)}$.
Also, since $V(G_{\eta_{i,j}(e)}) = V(G_e)$, we simply denote this set by $V$.

Let $c \colon V \to \ColorSet$ be a valid coloring.
We are going to show that $c$ is a $(C,N)$-coloring of $G_{\eta_{i,j}(e)}$ if and only if $c$ is a  $(C,N_e)$-coloring of $G_{e}$.
In order to prove this, it suffices to show that, for every color $a\in \ColorSet$ and every vertex $v\in V$, we have
$\min(\mathcal{N},N[\ell(v), a] + |\{u \in N_{G_{\eta_{i,j}(e)}}(v) : c(u) = a\}|)
=
\min(\mathcal{N},N_e[\ell(v), a] + |\{u \in N_{G_e}(v) : c(u) = a\}|)$.
If $\ell(v)\neq i,j$ then $N_e[\ell(v), a_i] = N[\ell(v), a_i]$ by definition of $N_e$ and clearly $N_{G_e}(v) = N_{G_{\eta_{i,j}(e)}}(v)$.
On the other hand, if $\ell(v)=i$ (if $\ell(v) = j$ the proof is analogous) then
\begin{align*}
	&	\min(\mathcal{N},N_e[i, a] + |\{u \in N_{G_{e}}(v) : c(u) = a\}|)	\\
=	&	\min(\mathcal{N}, \min(\mathcal{N},N[i,a] + C[j,a]) + |\{u \in N_{G_{e}}(v) : c(u) = a\}|)	\\
=	&	\min(\mathcal{N},N[i,a] + C[j,a] + |\{u \in N_{G_{e}}(v) : c(u) = a\}|)	\\
=	&	\min(\mathcal{N},N[i,a] + \min(\mathcal{N},|\{u \in V : c(u) = a \land \ell(u) = j\}|) + |\{u \in N_{G_{e}}(v) : c(u) = a\}|)	\\
=	&	\min(\mathcal{N},N[i,a] + |\{u \in V : c(u) = a \land \ell(u) = j\}| + |\{u \in N_{G_{e}}(v) : c(u) = a\}|)	\\
=	&	\min(\mathcal{N},N[i,a] + |\{u \in N_{G_{\eta_{i,j}(e)}}(v) : c(u) = a \land \ell(u) = j\}| + |\{u \in N_{G_{e}}(v) : c(u) = a\}|)	\\
=	&	\min(\mathcal{N},N[i, a] + |\{u \in N_{G_{\eta_{i,j}(e)}}(v) : c(u) = a\}|)
\end{align*}

The last two equalities follow from the fact that every vertex with label $j$ is a neighbor of $v$ in $G_{\eta_{i,j}(e)}$ but a non-neighbor of $v$ in $G_{e}$ (recall that we are working with irredundant clique-width expressions).
\end{proof}

%---------------------------------------------
\noindent\textbf{Lemma \ref{lemma:Relabeling} (Relabeling: $\rho_{i \rightarrow j}(e)$).}
\emph{}
\begin{proof}
If $C[i, a] \neq 0$ for some $a \in \ColorSet$, then, by definition, there exists no $(C, N)$-coloring of $G_{\rho_{i \rightarrow j}(e)}$ and $\lambda(\rho_{i \rightarrow j}(e), C, N) = \Error$.
So we may assume now that $C[i, a] = 0$ for all $a \in \ColorSet$.
Let $\alpha=\min\{\lambda(e, C_e, N_e) : (a),(b), (c)$ are satisfied$\}$.
We will first prove that $\lambda(\rho_{i \rightarrow j}(e), C, N)\geq \alpha$.
Let $c$ be a $(C,N)$-coloring of $G_{\rho_{i \rightarrow j}(e)}$ whose weight equals $\lambda(\rho_{i \rightarrow j}(e), C, N)$.
We will show that there exists a matrix $C_e$ in $\intInterval{0}{\mathcal{N}}^{\NumberOfLabels \times \NumberOfColors}$ satisfying $(b),(c)$ and such that $c$ is a $(C_e,N_e)$-coloring of $G_{e}$.
Let 
$$C_e[h,a] = \min(\mathcal{N}, |\{v \in V({G_e}) : c(v) = a \land \ell_e(v) = h\}|)$$
for every $h\in \intInterval{1}{\NumberOfLabels}$ and every $a\in \ColorSet$.
Since $c$ is a $(C,N)$-coloring of $G_{\rho_{i \rightarrow j}(e)}$, then $C[h,a] = \min(\mathcal{N}, |\{v \in V(G_{\rho_{i \rightarrow j}(e)}) : c(v) = a \land \ell_{\rho_{i \rightarrow j}(e)}(v) = h\})$ for every $h\in \intInterval{1}{\NumberOfLabels}$ and $a\in \ColorSet$.
Therefore, $(c)$ is trivially satisfied, since $\ell_{\rho_{i \rightarrow j}(e)}(v)=\ell_e(v)$ for any vertex $v$ whose label is neither $i$ nor $j$ in $G_e$.
Furthermore the following inequalities hold,
\begin{align*}
\min(\mathcal{N}, C_e[i,a] + C_e[j,a]) =&\min(\mathcal{N}, \min(\mathcal{N}, |\{v \in V({G_e}) : c(v) = a \land \ell_e(v) = i\}|)\\
&+\min(\mathcal{N}, |\{v \in V({G_e}) : c(v) = a \land \ell_e(v) = j\}|))\\
=&\min(\mathcal{N}, |\{v \in V({G_e}) : c(v) = a \land \ell_e(v) = i\}|\\
& +|\{v \in V({G_e}) : c(v) = a \land \ell_e(v) = j\}|)\\
=&\min(\mathcal{N}, |\{v \in V({G_e}) : c(v) = a \land (\ell_e(v) = i \lor\ell_e(v) = j)\}|)\\
=&\min(\mathcal{N}, |\{v \in V({G_{\rho_{i \rightarrow j}(e)}}) : c(v) = a \land \ell_{\rho_{i \rightarrow j}(e)}(v) = j\}|)\\
=& C[j,a]
\end{align*}
and thus, condition $(b)$ is satisfied as well.
We next show that $c$ is a $(C_e,N_e)$-coloring of $G_{e}$.
We know that $c$ is a valid coloring because $c$ is a $(C,N)$-coloring of $G_{\rho_{i \rightarrow j}(e)}$.
Also, (C1) is trivially satisfied from our definition of $C_e$.
For (C2), we have to show that $check(v, c(v), \upcomingNeighbors'_1, \ldots, \upcomingNeighbors'_{\NumberOfColors})$ is true for every $v$ in $G_e$, where $\upcomingNeighbors'_b=\min(\mathcal{N},N_e[\ell_e(v),a_b]+|\{u \in N_{G_e}(v) : c(u) = a_b\}|)$ for every $v\in V(G_e)$ and every $b\in \intInterval{1}{\NumberOfColors}$.
Since $c$ is a $(C,N)$-coloring of $G_{\rho_{i \rightarrow j}(e)}$, we know that $check(v, c(v), \upcomingNeighbors_1, \ldots, \upcomingNeighbors_\NumberOfColors)$ is true for every $v\in V(G_{\rho_{i \rightarrow j}(e)})$, where $\upcomingNeighbors_b=\min(\mathcal{N}, N[\ell_{\rho_{i \rightarrow j}(e)}(v), a_b] + |\{u \in N_{G_{\rho_{i \rightarrow j}(e)}}(v) : c(u) = a_b\}|)$ for every $b \in \intInterval{1}{\NumberOfColors}$.
By definition of $N_e$ and since $N_{G_e}(v) = N_{G_{\rho_{i \rightarrow j}(e)}}(v)$ for all vertices $v$, we have $\upcomingNeighbors_b=\upcomingNeighbors'_b$ for all $b\in \intInterval{1}{\NumberOfColors}$, and therefore $check(v, c(v), \upcomingNeighbors'_1, \ldots, \upcomingNeighbors'_\NumberOfColors)$ is true for every $v\in G_e$ and $b\in \intInterval{1}{\NumberOfColors}$.

We now show that $\lambda(\rho_{i \rightarrow j}(e), C, N)\leq \alpha$.
Let $C_e$ be a matrix in $ \intInterval{0}{\mathcal{N}}^{\NumberOfLabels \times \NumberOfColors}$ satisfying $(b),(c)$, and let $c$ be a $(C_e, N_e)$-coloring of $G_{e}$ of weight $\lambda(e, C_e, N_e)$.
We are going to show that $c$ is also a $(C,N)$-coloring of $G_{\rho_{i \rightarrow j}(e)}$.
Condition (C1) is trivially satisfied for every label $h$ in $\intInterval{1}{\NumberOfLabels}\setminus \{i, j\}$ and every color $a$.
For label $i$, condition (C1) is also true since $C[i,a]=0$ by assumption and there are no vertices with label $i$ in $V(G_{\rho_{i \rightarrow j}(e)})$.
We can show in a similar way as above that $C[j,a]=\min(\mathcal{N}, |\{v \in V(G_{\rho_{i \rightarrow j}(e)}) : c(v) = a \land \ell_{\rho_{i \rightarrow j}(e)}(v) = j\}|)$ for all $a \in \ColorSet$.
We need to verify now (C2), i.e., for every vertex $v$, we have that $check(v,c(v),\upcomingNeighbors_1,\ldots,\upcomingNeighbors_\NumberOfColors)$ is true, where $\upcomingNeighbors_b=\min(\mathcal{N},N[\ell_{\rho_{i \rightarrow j}(e)}(v),a_b]+|\{u \in N_{G_{\rho_{i \rightarrow j}(e)}}(v) : c(u) = a_b\}|)$.
As before, this is a consequence of the fact that $c$ is a $(C_e,N_e)$-coloring of $G_{e}$, and that for every vertex $v$ and color $a_b$, we have
$N_e[\ell_e(v),a_b]=N[\ell_{\rho_{i \rightarrow j}(e)}(v),a_b]$ by definition and $|\{u \in N_{G_e}(v) : c(u) = a_b\}|)=|\{u \in N_{G_{\rho_{i \rightarrow j}(e)}}(v) : c(u) = a_b\}|$.
\end{proof}

%--------------------------------------------------------------------------------------------------------------------------------------------------
% Section 5 (Global properties)
\subsection{Omitted proofs of Section~\ref{sec:SizeGlobProp}}
In what follows, we will write $\widehat{p}$ instead of $p_1, \ldots, p_m$ to make the notation less cumbersome. Note that $\widehat{p}$ is empty when $m = 0$.

%---------------------------------------------
\begin{lemma}\label{lem:GP-size-initial}
Let $\Pi$ be color-counting 1-locally checkable problem with a set of global properties $\Gamma$, input graph $G$ with a clique-width $\NumberOfLabels$-expression $e_G$ of $G$, and let $a \in \ColorSet$.
Let $\sigma \subseteq \mathbb{N}$ and let $(Q, \{1\}, \delta, q_0, F)$ be a deterministic finite-state automaton that accepts a string of $n$ consecutive 1's if and only if $n \in \sigma$.
Let $\in_F \colon Q \to \BoolSet$ be the function such that $\in_F(q) = (q \in F)$.

Assume that the minimum weight of a proper coloring of $G$ satisfying $\Gamma$ equals
$$\min\setst{\lambda(e_G, C, N, \widehat{p})}{P(C, N, \widehat{p})=\True}$$
for some property $P$.
Furthermore, assume that
\begin{itemize}
\item[(1)] for every proper coloring $c$ of $G$ satisfying $\Gamma$ there exist $C, N, \widehat{p}$ such that $P(C, N, \widehat{p})=\True$ and such that $c$ is a $(C, N, \widehat{p})$-coloring of $G$;

\item[(2)] for every $C, N, \widehat{p}$ such that $P(C, N, \widehat{p})=\True$ and such that $\lambda(e_G, C, N, \widehat{p}) \neq \Error$, every $(C, N, \widehat{p})$-coloring of $G$ is a proper coloring of $G$ satisfying $\Gamma$.
\end{itemize}

Then, the minimum weight of a proper coloring $c$ of $G$ satisfying $\Gamma$ and such that $|\setst{v \in V(G)}{c(v) = a}| \in \sigma$ equals
$$\min\setst{\lambda(e_G, C, N, \widehat{p}, q_0, \in_F)}{P(C, N, \widehat{p})=\True}.$$
Moreover,
\begin{itemize}
\item[(a)] for every proper coloring $c$ of $G$ satisfying $\Gamma$ and such that $|\setst{v \in V(G)}{c(v) = a}| \in \sigma$, there exist $C, N, \widehat{p}$ such that $P(C, N, \widehat{p})=\True$ and such that $c$ is $(C, N, \widehat{p}, q_0, \in_F)$-coloring of $G$,

\item[(b)] for every $C, N, \widehat{p}$ such that $P(C, N, \widehat{p})=\True$ and such that $\lambda(e_G, C, N, \widehat{p}, q_0, \in_F) \neq \Error$, every $(C, N, \widehat{p}, q_0, \in_F)$-coloring $c$ of $G$ is a proper coloring of $G$ satisfying $\Gamma$ and such that $|\setst{v \in V(G)}{c(v) = a}| \in \sigma$.
\end{itemize}

\end{lemma}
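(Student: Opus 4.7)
The plan is to prove claims (a) and (b) first and then deduce the minimum-weight equality as a two-sided inequality, closely mirroring the structure of the proof of Lemma~\ref{lem:initialLemma}. The single technical fact that drives everything is the defining property of the automaton: by construction $(Q,\{1\},\delta,q_0,F)$ accepts $1^n$ iff $n\in\sigma$, which in the $\delta^n$ notation reads
\[
\in_F\!\bigl(\delta^{n}(q_0)\bigr)=\True \iff n\in\sigma.
\]
Combining this with the recursive definition of $(C,N,\widehat{p},q_0,\in_F)$-colorings introduced in Section~\ref{sec:SizeGlobProp}, a valid coloring $c$ of $G$ is a $(C,N,\widehat{p},q_0,\in_F)$-coloring if and only if it is a $(C,N,\widehat{p})$-coloring and $|\{v\in V(G) : c(v)=a\}|\in\sigma$.

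For (a), I would start from a proper coloring $c$ of $G$ satisfying $\Gamma$ with $|\{v : c(v)=a\}|\in\sigma$. Since $c$ satisfies $\Gamma$, hypothesis (1) supplies some $C,N,\widehat{p}$ with $P(C,N,\widehat{p})=\True$ making $c$ a $(C,N,\widehat{p})$-coloring. The size condition $|\{v:c(v)=a\}|\in\sigma$ together with the automaton fact yields $\in_F(\delta^n(q_0))=\True$ for $n=|\{v:c(v)=a\}|$, so by the recursive definition $c$ is in fact a $(C,N,\widehat{p},q_0,\in_F)$-coloring.

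For (b), I would run the same reasoning in reverse. Pick $C,N,\widehat{p}$ with $P(C,N,\widehat{p})=\True$ and $\lambda(e_G,C,N,\widehat{p},q_0,\in_F)\neq\Error$, and let $c$ be any witnessing $(C,N,\widehat{p},q_0,\in_F)$-coloring. By the recursive definition $c$ is in particular a $(C,N,\widehat{p})$-coloring, so $\lambda(e_G,C,N,\widehat{p})\neq\Error$; hypothesis (2) then yields that $c$ is a proper coloring of $G$ satisfying $\Gamma$. The extra clause $\in_F(\delta^n(q_0))=\True$ together with the automaton fact re-translates into $|\{v:c(v)=a\}|\in\sigma$, as required.

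The minimum-weight equality then falls out immediately. Direction $(\geq)$: from any proper coloring $c$ of $G$ satisfying $\Gamma$ with $|c^{-1}(a)|\in\sigma$, (a) produces $C,N,\widehat{p}$ with $P$ true witnessing $\textsc{w}(c)\geq \lambda(e_G,C,N,\widehat{p},q_0,\in_F)$. Direction $(\leq)$: for any $C,N,\widehat{p}$ with $P$ true and $\lambda\neq\Error$, (b) shows that a minimum-weight witnessing coloring is itself admissible for the left-hand side, while entries equal to $\Error$ can be discarded since $\Error$ is the maximum of $\wlesseq$. I do not foresee a real obstacle here: the whole argument is bookkeeping that wraps the automaton acceptance condition around the existing framework. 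The only subtlety worth attention is the $\Error$ case, which is handled cleanly by the observation that every $(C,N,\widehat{p},q_0,\in_F)$-coloring is, a fortiori, a $(C,N,\widehat{p})$-coloring, so the step "$\lambda(e_G,C,N,\widehat{p},q_0,\in_F)\neq\Error$ implies $\lambda(e_G,C,N,\widehat{p})\neq\Error$" is automatic and avoids any circularity when invoking hypothesis (2).
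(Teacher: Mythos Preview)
Your proposal is correct and follows essentially the same approach as the paper: both arguments hinge on the equivalence $\in_F(\delta^n(q_0))=\True \iff n\in\sigma$, use hypothesis~(1) for one direction and hypothesis~(2) for the other, and handle the $\Error$ case via the observation that a $(C,N,\widehat{p},q_0,\in_F)$-coloring is in particular a $(C,N,\widehat{p})$-coloring. The only difference is organizational---you prove (a) and (b) first and then deduce the minimum-weight equality, whereas the paper establishes the two weight inequalities directly and then remarks that (a) and (b) were shown along the way---but the logical content is identical.
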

\begin{proof}
We will first show that for every proper coloring $c$ of $G$ satisfying $\Gamma$ and such that $|\setst{v \in V(G)}{c(v) = a}| \in \sigma$, there exist $C, N, \widehat{p}$ such that $P(C, N, \widehat{p})=\True$ and such that $\textsc{w}(c) \geq \lambda(e_G, C, N, \widehat{p}, q_0, \in_F)$.
Suppose we have such a proper coloring $c$ of $G$ satisfying $\Gamma$ and such that $|\setst{v \in V(G)}{c(v) = a}| \in \sigma$.
By assumption (1), we know that there exist $C, N, \widehat{p}$ such that $P(C, N, \widehat{p})=\True$ and $c$ is a $(C, N, \widehat{p})$-coloring of $G$.
Since we are assuming that $|\setst{v \in V(G)}{c(v) = a}| \in \sigma$, and since the automaton accepts a string of $t$ consecutive 1's if and only if $t \in \sigma$, it follows that $\in_F(\delta^n(q_0)) = \True$, where $n = |\setst{v \in V(G)}{c(v) = a}|$.
Thus, $c$ is a $(C, N, \widehat{p}, q_0, \in_F)$-coloring of $G$ and $\textsc{w}(c) \geq \lambda(C, N, \widehat{p}, q_0, \in_F)$.

On the other hand, we will now show that for every $C, N, \widehat{p}$ such that $P(C, N, \widehat{p}) = \True$ and such that $\lambda(e_G, C, N, \widehat{p}, q_0, \in_F) \neq \Error$, there exists a proper coloring $c$ of $G$ satisfying $\Gamma$ and such that $|\setst{v \in V(G)}{c(v) = a}| \in \sigma$ with $\textsc{w}(c) = \lambda(e_G, C, N, \widehat{p}, q_0, \in_F)$.
So suppose we have $C, N, \widehat{p}$ such that $P(C, N, \widehat{p}) = \True$ and such that $\lambda(C, N, \widehat{p}, q_0, \in_F) \neq \Error$.
By the latter assumption and by the definition of $\lambda(C, N, \widehat{p}, q_0, \in_F)$, we get that $\lambda(C, N, \widehat{p}) \neq \Error$.
Let $c$ be a $(C, N, \widehat{p}, q_0, \in_F)$-coloring of $G$ (notice that at least one such $c$ exists).
By definition, $c$ is a $(C, N, \widehat{p})$-coloring of $G$.
Then we conclude by assumption (2) that $c$ is a proper coloring of $G$ satisfying $\Gamma$.
Finally, since $c$ is a $(C, N, \widehat{p}, q_0, \in_F)$-coloring of $G$, we have that $\in_F(\delta^n(q_0)) = \True$, where $n = |\setst{v \in V(G)}{c(v) = a}|$, which implies that $|\setst{v \in V(G)}{c(v) = a}| \in \sigma$.
If we consider in particular $c$ of minimum weight, then $\textsc{w}(c) = \lambda(e_G, C, N, \widehat{p}, q_0, \in_F)$.

Notice that (a) and (b) are implicitly shown by the above.
\end{proof}

%---------------------------------------------
\begin{lemma}[Creating new vertex: $i(v)$]\label{lem:GP-size-newVertex}
If $C[i,a]=1$ and $f_a(\delta(s_a, 1))=\True$,
or if $C[i,a]=0$ and $f_a(s_a)=\True$, then
	\begin{align*}
	\genDomSym(i(v), C, N,\widehat{p}, s_a, f_a) = \genDomSym(i(v), C, N,\widehat{p}) .
	\end{align*}
Otherwise, $\genDomSym(i(v), C, N,\widehat{p}, s_a, f_a) = \Error$.

\end{lemma}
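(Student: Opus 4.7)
The plan is to leverage Lemma~\ref{lemma:Label} together with the fact that $G_{i(v)}$ has only one vertex. Indeed, if $c$ is any $(C,N,\widehat{p})$-coloring of $G_{i(v)}$ (so, in particular, a $(C,N)$-coloring), then by Lemma~\ref{lemma:Label} such $c$ exists only when $C$ has exactly one nonzero entry, located at $(i, a')$ for some $a' \in L_v$, and in that case $c$ is uniquely given by $c(v) = a'$. The extra parameters $\widehat{p}$ carry local checks in the same spirit, so this uniqueness statement is inherited: there is at most one $(C,N,\widehat{p})$-coloring of $G_{i(v)}$, and its weight is $\lambda(i(v), C, N, \widehat{p})$ when it exists and $\Error$ otherwise.

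Next, I would unfold the extra condition separating $(C,N,\widehat{p}, s_a, f_a)$-colorings from $(C,N,\widehat{p})$-colorings. Since $V(G_{i(v)}) = \{v\}$, the quantity $n = |\{u \in V(G_{i(v)}) : c(u) = a\}|$ is $1$ if $c(v) = a$ and $0$ otherwise, so the automaton acceptance condition $f_a(\delta^{n}(s_a)) = \True$ reduces to either $f_a(\delta(s_a,1)) = \True$ or $f_a(s_a) = \True$. I would then case-split on $C[i,a]$: if $C[i,a] = 1$, the unique candidate coloring necessarily satisfies $c(v) = a$, so $n=1$ and the relevant automaton check is $f_a(\delta(s_a,1))=\True$; if $C[i,a] = 0$, then either $c(v) = a'$ for some $a' \neq a$ (so $n=0$) or no $(C,N,\widehat{p})$-coloring exists at all, and the relevant automaton check is $f_a(s_a)=\True$. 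These two alternatives line up exactly with the two disjuncts in the hypothesis of the lemma.

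Putting the pieces together, I would conclude that in the case of the hypothesis the set of $(C,N,\widehat{p},s_a,f_a)$-colorings of $G_{i(v)}$ coincides with the set of $(C,N,\widehat{p})$-colorings, proving the equality $\lambda(i(v), C, N, \widehat{p}, s_a, f_a) = \lambda(i(v), C, N, \widehat{p})$. In the complementary case, no $(C,N,\widehat{p})$-coloring can simultaneously satisfy the automaton constraint, so the set of $(C,N,\widehat{p},s_a,f_a)$-colorings is empty and $\lambda(i(v), C, N, \widehat{p}, s_a, f_a) = \Error$.

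The only delicate point, which is really more of a bookkeeping issue than a genuine obstacle, is ensuring consistency when $\lambda(i(v), C, N, \widehat{p}) = \Error$ already: if this happens while the hypothesis still holds (say $C[i,a] = 0$, $f_a(s_a) = \True$, but $C$ has no nonzero entry at all, or entries forcing a color outside $L_v$), both sides of the claimed equality equal $\Error$, so the statement remains valid. I would close the proof by explicitly noting this compatibility, which makes the otherwise clause tight even when the underlying $(C,N,\widehat{p})$ problem is infeasible.
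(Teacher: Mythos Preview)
Your proposal is correct and follows essentially the same approach as the paper's proof. Both arguments hinge on the observation that, since $G_{i(v)}$ has a single vertex and $\mathcal{N} \geq 1$, condition (C1) forces $c(v) = a$ exactly when $C[i,a] = 1$, so the automaton condition $f_a(\delta^n(s_a))$ is determined entirely by $C[i,a]$; the paper derives this directly from (C1) while you route it through Lemma~\ref{lemma:Label}, and your explicit handling of the $\lambda(i(v),C,N,\widehat{p}) = \Error$ edge case is a welcome clarification.
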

\begin{proof}
Clearly,  $G_{i(v)}$ is a graph consisting of a single vertex $v$ with label $i$.
By definition, $\genDomSym(i(v), C, N,$ $\widehat{p}, s_a, f_a)$ is the minimum weight among all  $(C, N, \widehat{p})$-colorings $c$ of $G_{i(v)}$ such that $f_a(\delta^n (s_a)) = \True$, where $n = |\{u \in V(G_{i(v)}) : c(u) = a\}|$.
Moreover, any $(C, N,\widehat{p})$-coloring $c$ of $G_{i(v)}$ satisfies the following property: $\min(\mathcal{N}, |\setst{u \in V(G_{i(v)})}{c(u) = b \land \ell_{i(v)}(u) = j}|) = C[j,b]$ for all $j \in \intInterval{1}{\NumberOfLabels}$ and all $b \in \ColorSet$ (recall Definition~\ref{def:CN-coloring} of a $(C,N)$-coloring).
In particular, since $\mathcal{N} \geq 1$ and the graph $G_{i(v)}$ has only one vertex, any $(C, N,\widehat{p})$-coloring $c$ of $G_{i(v)}$ satisfies the following property: if $C[i,a] = 1$ then $c(v) = a$, otherwise $c(v) \neq a$.
Therefore, we can restate the definition of $\genDomSym(i(v), C, N,\widehat{p}, s_a, f_a)$ as the minimum weight among all $(C, N,\widehat{p})$-colorings $c$ of $G_{i(v)}$ such that if $C[i,a] = 1$ then $f_a(\delta(s_a)) = \True$, otherwise $f_a(s_a) = \True$.
Now, since $\genDomSym(i(v), C, N,\widehat{p})$ is the minimum weight among all $(C, N,\widehat{p})$-colorings of $G_{i(v)}$, the statement trivially follows.
\end{proof}

%---------------------------------------------
\begin{lemma}[Disjoint union: $e_1 \oplus e_2$]\label{lem:GP-size-disjointUnion}
Assume that
\begin{align*}
\genDomSym(e_1 \oplus e_2, C, N, \widehat{p}) =
	\min\{&
		\genDomSym(e_1, C_1, N_1, \widehat{p_1})
		\weightsSum
		\genDomSym(e_2, C_2, N_2, \widehat{p_2})
	:\\
	&P(C, N, \widehat{p}, C_1, N_1, \widehat{p_1}, C_2, N_2, \widehat{p_2}) = \True
\}
\end{align*}
for some property $P$.
%----------------------------------
Moreover, assume that 
\begin{itemize}
\item[(1)] for every $(C, N, \widehat{p})$-coloring $c$ of $G_{e_1 \oplus e_2}$ there exist $C_1, N_1, \widehat{p_1}, C_2, N_2, \widehat{p_2}$ such that $P(C, N,$ $\widehat{p}, C_1, N_1, \widehat{p_1}, C_2, N_2, \widehat{p_2}) = \True$ and such that $c|_{V(G_{e_1})}$ is a $(C_1, N_1, \widehat{p_1})$-coloring of $G_{e_1}$ and $c|_{V(G_{e_2})}$ is a $(C_2, N_2, \widehat{p_2})$-coloring of $G_{e_2}$;

\item[(2)] for all $C_1, N_1, \widehat{p_1}, C_2, N_2, \widehat{p_2}$ such that $P(C, N, \widehat{p}, C_1, N_1, \widehat{p_1}, C_2, N_2, \widehat{p_2}) = \True$, if $c_1$ is a $(C_1, N_1, \widehat{p_1})$-coloring of $G_{e_1}$ and $c_2$ is a $(C_2, N_2, \widehat{p_2})$-coloring of $G_{e_2}$, then $c = c_1 \cup c_2$ is a $(C, N, \widehat{p})$-coloring of $G_{e_1 \oplus e_2}$.
\end{itemize}

%----------------------------------
Then,
\begin{align*}
\genDomSym(e_1 \oplus e_2, C, N, \widehat{p}, s_a, f_a) =
	\min\{&
		\genDomSym(e_1, C_1, N_1, \widehat{p_1}, s_a, eq_{q})
		\weightsSum
		\genDomSym(e_2, C_2, N_2, \widehat{p_2}, q, f_a)
	:\\
	&q \in Q
	\text{ and }
	P(C, N, \widehat{p}, C_1, N_1, \widehat{p_1}, C_2, N_2, \widehat{p_2}) = \True
\}.
\end{align*}

%----------------------------------
Moreover,
\begin{itemize}
\item[(a)] for every $(C, N, \widehat{p}, s_a, f_a)$-coloring $c$ of $G_{e_1 \oplus e_2}$ there exist $q$, $C_1, N_1, \widehat{p_1}, C_2, N_2, \widehat{p_2}$ such that $q \in Q$, $P(C, N, \widehat{p}, C_1, N_1, \widehat{p_1}, C_2, N_2, \widehat{p_2}) = \True$, and $c|_{V(G_{e_1})}$ is a $(C_1, N_1, \widehat{p_1}, s_a, eq_q)$-coloring of $G_{e_1}$ and $c|_{V(G_{e_2})}$ is a $(C_2, N_2, \widehat{p_2}, q, f_a)$-coloring of $G_{e_2}$;

\item[(b)] for all $q$, $C_1, N_1, \widehat{p_1}, C_2, N_2, \widehat{p_2}$ such that $q \in Q$ and $P(C, N, \widehat{p}, C_1, N_1, \widehat{p_1}, C_2, N_2, \widehat{p_2}) = \True$, if $c_1$ is a $(C_1, N_1, \widehat{p_1}, s_a, eq_q)$-coloring of $G_{e_1}$ and $c_2$ is a $(C_2, N_2, \widehat{p_2}, q, f_a)$-coloring of $G_{e_2}$ then $c = c_1 \cup c_2$ is a $(C, N, \widehat{p}, s_a, f_a)$-coloring of $G_{e_1 \oplus e_2}$.
\end{itemize}

\end{lemma}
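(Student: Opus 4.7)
The plan is to reduce this claim to the base (non-automaton) version by splitting the trajectory of the automaton at the junction between $G_{e_1}$ and $G_{e_2}$. For each state $q \in Q$, introduce the indicator $eq_q \colon Q \to \BoolSet$ defined by $eq_q(r) = \True$ iff $r = q$. Intuitively, $q$ represents the state of the automaton after having processed exactly the vertices of color $a$ in $V(G_{e_1})$, starting from $s_a$; the second factor then picks up from $q$ and must end in an accepting state under $f_a$.

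First I would prove the $\geq$ direction and statement (a) simultaneously. Let $c$ be a $(C, N, \widehat{p}, s_a, f_a)$-coloring of $G_{e_1 \oplus e_2}$ of minimum weight. Since $c$ is in particular a $(C, N, \widehat{p})$-coloring, assumption (1) yields matrices/parameters $C_1, N_1, \widehat{p_1}, C_2, N_2, \widehat{p_2}$ with $P(\cdots) = \True$ such that $c_1 := c|_{V(G_{e_1})}$ and $c_2 := c|_{V(G_{e_2})}$ are $(C_1, N_1, \widehat{p_1})$- and $(C_2, N_2, \widehat{p_2})$-colorings respectively. Set $n_j = |\{v \in V(G_{e_j}) : c_j(v) = a\}|$ for $j = 1, 2$ and $q = \delta^{n_1}(s_a)$. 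Since $V(G_{e_1})$ and $V(G_{e_2})$ are disjoint, the total number of $a$-colored vertices in $G_{e_1 \oplus e_2}$ is $n_1 + n_2$, and composing transitions gives $\delta^{n_1 + n_2}(s_a) = \delta^{n_2}(q)$. By construction $eq_q(\delta^{n_1}(s_a)) = \True$, so $c_1$ is a $(C_1, N_1, \widehat{p_1}, s_a, eq_q)$-coloring; and since $c$ is a $(C,N,\widehat{p},s_a,f_a)$-coloring we have $f_a(\delta^{n_2}(q)) = f_a(\delta^{n_1+n_2}(s_a)) = \True$, so $c_2$ is a $(C_2, N_2, \widehat{p_2}, q, f_a)$-coloring. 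Because $\textsc{w}(c) = \textsc{w}(c_1) \weightsSum \textsc{w}(c_2)$, this both proves (a) and gives the $\geq$ inequality.

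For the $\leq$ direction and statement (b), fix any $q \in Q$ and parameters with $P(\cdots)=\True$ for which the two $\lambda$-values are finite, and let $c_1, c_2$ be corresponding optimal colorings. Define $c = c_1 \cup c_2$, which is well defined since $V(G_{e_1}) \cap V(G_{e_2}) = \emptyset$. By assumption (2), $c$ is a $(C, N, \widehat{p})$-coloring of $G_{e_1 \oplus e_2}$. Letting $n_1, n_2$ be as above, the hypothesis $eq_q(\delta^{n_1}(s_a)) = \True$ forces $\delta^{n_1}(s_a) = q$, while $f_a(\delta^{n_2}(q)) = \True$ gives $f_a(\delta^{n_1+n_2}(s_a)) = \True$; hence $c$ is a $(C, N, \widehat{p}, s_a, f_a)$-coloring whose weight is $\textsc{w}(c_1) \weightsSum \textsc{w}(c_2)$.

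The main obstacle is not conceptual but clerical: one must make the composition argument $\delta^{n_1+n_2}(s_a) = \delta^{n_2}(\delta^{n_1}(s_a))$ fully rigorous (this follows by induction from the recursive definition of $\delta^n$) and note that the use of $eq_q$ is legitimate precisely because the automaton is deterministic, so that ``state after $G_{e_1}$'' is a well-defined single element of $Q$ and hence forms a valid partitioning parameter for the min on the right-hand side. Everything else is an inheritance of the (C1)--(C2) conditions from the underlying disjoint-union lemma via assumptions (1) and (2).
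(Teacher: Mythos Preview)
Your proposal is correct and follows essentially the same approach as the paper's proof: both split the automaton run at the intermediate state $q = \delta^{n_1}(s_a)$, use assumption~(1) to obtain the restricted colorings in the $\geq$ direction and assumption~(2) to glue them in the $\leq$ direction, and observe that (a) and (b) fall out along the way. The only minor omission is that you do not explicitly dispose of the case $\lambda(e_1 \oplus e_2, C, N, \widehat{p}, s_a, f_a) = \Error$ before taking a minimum-weight coloring $c$, but this is routine.
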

%---------------------------------------------------------------------------------------------------
\begin{proof}
Let
$\alpha = \min\{
	\genDomSym(e_1, C_1, N_1, \widehat{p_1}, s_a, eq_{q})
	\weightsSum
	\genDomSym(e_2, C_2, N_2, \widehat{p_2}, q, f_a)
	:
	q \in Q
	\text{ and }
	P(C, N, \widehat{p}, C_1, N_1,$ $\widehat{p_1}, C_2, N_2, \widehat{p_2}) = \True
\}$.
We will first prove that $\genDomSym(e_1 \oplus e_2, C, N, \widehat{p}, s_a, f_a) \geq \alpha$.
Note that if $\lambda(e_1 \oplus e_2, C, N, \widehat{p}, s_a, f_a) = \Error$, then we are done.
So assume that $\lambda(e_1 \oplus e_2, C, N, \widehat{p}, s_a, f_a) \neq \Error$.
Let $c$ be a $(C, N, \widehat{p}, s_a, f_a)$-coloring of $G_{e_1 \oplus e_2}$, that is, $c$ is a $(C, N, \widehat{p})$-coloring of $G_{e_1 \oplus e_2}$ such that  $f_a(\delta^n (s_a)) = \True$, where $n = |\{v \in V(G_{e_1 \oplus e_2}) : c(v) = a\}|$.
Let $c_1 = c|_{V(G_{e_1})}$ and $c_2 = c|_{V(G_{e_2})}$.
We need to show that there exist $C_1, N_1, \widehat{p_1}, C_2, N_2, \widehat{p_2}$ and $q \in Q$ such that $P(C, N, \widehat{p}, C_1, N_1, \widehat{p_1}, C_2, N_2, \widehat{p_2})=\True$, and $c_1$ is a $(C_1, N_1, \widehat{p_1}, s_a, eq_q)$-coloring of $G_{e_1}$, and $c_2$ is a $(C_2, N_2, \widehat{p_2}, q, f_a)$-coloring of $G_{e_2}$.
By assumption (1), we know already that there exist $C_1, N_1, \widehat{p_1}, C_2, N_2, \widehat{p_2}$ such that $P(C, N, \widehat{p}, C_1, N_1, \widehat{p_1}, C_2, N_2, \widehat{p_2})=\True$, and $c_1$ is a $(C_1, N_1, \widehat{p_1})$-coloring of $G_{e_1}$ and $c_2$ is a $(C_2, N_2, \widehat{p_2})$-coloring of $G_{e_2}$.
Let $n_1 = |\{v \in V(G_{e_1}) : c_1(v) = a\}|$ and $n_2 = |\{v \in V(G_{e_2}) : c_2(v) = a\}|$.
Clearly, $n_1 + n_2 = n$, since $V(G_{e_1 \oplus e_2})=V(G_{e_1})\cup V(G_{e_2})$ and $V(G_{e_1})$ and $V(G_{e_2})$ are disjoint.
Let $q = \delta^{n_1}(s_a)$.
Clearly, $eq_q(\delta^{n_1}(s_a)) = \True$, and thus, $c_1$  is a $(C_1, N_1, \widehat{p_1},  s_a, eq_{q})$-coloring of $G_{e_1}$.
Furthermore, $f_a(\delta^{n_2}(q)) = f_a(\delta^{n_2}(\delta^{n_1}(s_a))) = f_a(\delta^{n_1 + n_2}(s_a)) = f_a(\delta^n(s_a)) = \True$, where the last equality comes from the fact that $c$ is a $(C, N, \widehat{p}, s_a, f_a)$-coloring of $G_{e_1 \oplus e_2}$.
We conclude that $c_2$ is a $(C_2, N_2, \widehat{p_2}, q, f_a)$-coloring of $G_{e_2}$.
If we consider in particular a $(C, N, \widehat{p}, s_a, f_a)$-coloring $c$ of $G_{e_1 \oplus e_2}$ of minimum weight, then $\genDomSym(e_1 \oplus e_2, C, N, \widehat{p}, s_a, f_a) = \textsc{w}(c) = \textsc{w}(c_1) \weightsSum \textsc{w}(c_2) \geq \genDomSym(e_1, C_1, N_1, \widehat{p_1}, s_a, eq_{q}) \weightsSum \genDomSym(e_2, C_2, N_2, \widehat{p_2}, q, f_a) \geq \alpha$.

%----------------------------------
Now we will show that $\genDomSym(e_1 \oplus e_2, C, N, \widehat{p}, s_a, f_a) \leq \genDomSym(e_1, C_1, N_1, \widehat{p_1}, s_a, eq_{q}) \weightsSum \genDomSym(e_2, C_2, N_2,$ $\widehat{p_2}, q, f_a)$ whenever $q \in Q$ and $P(C, N, \widehat{p}, C_1, N_1, \widehat{p_1}, C_2, N_2, \widehat{p_2}) = \True$.
This will directly imply that $\genDomSym(e_1 \oplus e_2, C, N, \widehat{p}, s_a, f_a) \leq \alpha$.
Let $q \in Q$ and $C_1, N_1, \widehat{p_1}, C_2, N_2, \widehat{p_2}$ be such that $P(C, N, \widehat{p}, C_1, N_1, \widehat{p_1}, C_2,$ $N_2, \widehat{p_2}) = \True$.
If $\genDomSym(e_1, C_1, N_1, \widehat{p_1}, s_a, eq_{q}) = \Error$ or $\genDomSym(e_2, C_2, N_2, \widehat{p_2}, q, f_a) = \Error$, then we are done.
So assume that $\genDomSym(e_1, C_1, N_1, \widehat{p_1}, s_a, eq_{q}) \neq \Error$ and $\genDomSym(e_2, C_2, N_2, \widehat{p_2}, q, f_a) \neq \Error$.
Let $c_1$ be a $(C_1, N_1, \widehat{p_1}, s_a, eq_q)$-coloring of $G_{e_1}$ and 
$c_2$ be a $(C_2, N_2, \widehat{p_2}, q, f_a)$-coloring of $G_{e_2}$.
Let $n_1 = |\{v \in V(G_{e_1}) : c_1(v) = a\}|$ and $n_2 = |\{v \in V(G_{e_2}) : c_2(v) = a\}|$.
Consider $c = c_1 \cup c_2$ which, by assumption (2), is a $(C, N, \widehat{p})$-coloring of $G_{e_1 \oplus e_2}$.
We clearly have that $\textsc{w}(c) = \textsc{w}(c_1) \weightsSum \textsc{w}(c_2)$.
To show that $c$ is a $(C, N, \widehat{p}, s_a, f_a)$-coloring of $G_{e_1 \oplus e_2}$, it remains to show that $f_a(\delta^n(s_a))=\True$, where $n = |\{v \in V(G_{e_1 \oplus e_2}) : c(v) = a\}|$.
Clearly, $n = n_1 + n_2$.
Since $c_1$ is a $(C_1, N_1, \widehat{p_1}, s_a, eq_q)$-coloring of $G_{e_1}$, we have $eq_q(\delta^{n_1}(s_a)) = \True$, thus, $q = \delta^{n_1}(s_a)$.
Similarly, since $c_2$ is a $(C_2, N_2, \widehat{p_2}, q, f_a)$-coloring of $G_{e_2}$, we have that $f_a(\delta^{n_2}(q)) = \True$.
Hence, $\True = f_a(\delta^{n_2}(q)) = f_a(\delta^{n_2}(\delta^{n_1}(s_a))) = f_a(\delta^{n_1 + n_2}(s_a)) = f_a(\delta^{n}(s_a))$, and thus, $c$ is a $(C, N, \widehat{p}, s_a, f_a)$-coloring of $G_{e_1 \oplus e_2}$.
If we consider in particular a $(C_1, N_1, \widehat{p_1}, s_a, eq_q)$-coloring $c_1$ of $G_{e_1}$ of minimum weight as well as a $(C_2, N_2, \widehat{p_2}, q, f_a)$-coloring $c_2$ of $G_{e_2}$ of minimum weight, we obtain $\genDomSym(e_1, C_1, N_1, \widehat{p_1}, s_a, eq_{q}) \weightsSum \genDomSym(e_2, C_2, N_2, \widehat{p_2}, q, f_a) = \textsc{w}(c_1) \weightsSum \textsc{w}(c_2) = \textsc{w}(c) \geq \genDomSym(e_1 \oplus e_2, C, N, \widehat{p}, s_a, f_a)$.

Notice that (a) and (b) are shown implicitly by the above.
\end{proof}

%---------------------------------------------
\begin{lemma}[Join: $\eta_{i,j}(e)$]\label{lem:GP-size-join}
Assume that there exist $C', N', \widehat{p'}$ such that $c$ is a $(C, N, \widehat{p})$-coloring of $G_{\eta_{i,j}(e)}$ if and only if $c$ is a $(C', N', \widehat{p'})$-coloring of $G_{e}$.

Then, $c$ is a $(C, N, \widehat{p}, s_a, f_a)$-coloring of $G_{\eta_{i,j}(e)}$ if and only if $c$ is a $(C', N', \widehat{p'}, s_a, f_a)$-coloring of $G_{e}$.
In particular,
$$\genDomSym(\eta_{i,j}(e), C, N, \widehat{p}, s_a, f_a) = \genDomSym(e, C', N', \widehat{p'}, s_a, f_a).$$

\end{lemma}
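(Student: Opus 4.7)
The plan is to unfold the two layers of the definition and reduce to the given equivalence on $(C,N,\widehat{p})$-colorings. Recall that by construction a $(C,N,\widehat{p},s_a,f_a)$-coloring of a graph $H$ is a $(C,N,\widehat{p})$-coloring $c$ of $H$ that additionally satisfies $f_a(\delta^{n}(s_a)) = \True$, where $n = |\{v \in V(H) : c(v) = a\}|$. So the two conditions of a $(\cdot,s_a,f_a)$-coloring decouple: the ``inner'' $(C,N,\widehat{p})$-part, and the ``outer'' automaton condition that depends only on the number of $a$-colored vertices.

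First I would invoke the hypothesis to replace the inner part: the hypothesis gives an explicit $C',N',\widehat{p'}$ such that being a $(C,N,\widehat{p})$-coloring of $G_{\eta_{i,j}(e)}$ is the same as being a $(C',N',\widehat{p'})$-coloring of $G_e$. Then I would observe that the join operation $\eta_{i,j}$ does not create or delete vertices, so $V(G_{\eta_{i,j}(e)}) = V(G_e)$. Consequently, for any function $c : V(G_{\eta_{i,j}(e)}) \to \ColorSet$ (viewed equivalently as a function on $V(G_e)$), the count $n = |\{v \in V(G_{\eta_{i,j}(e)}) : c(v) = a\}|$ is identical to $|\{v \in V(G_e) : c(v) = a\}|$. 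Hence the outer condition $f_a(\delta^{n}(s_a)) = \True$ has exactly the same truth value on both sides.

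Combining the two observations, $c$ is a $(C,N,\widehat{p},s_a,f_a)$-coloring of $G_{\eta_{i,j}(e)}$ if and only if $c$ is a $(C',N',\widehat{p'},s_a,f_a)$-coloring of $G_e$. Taking the minimum weight over both sets (which coincide, including weights since the weight of $c$ is defined vertex-by-vertex and the vertex sets are identical) yields $\lambda(\eta_{i,j}(e),C,N,\widehat{p},s_a,f_a) = \lambda(e,C',N',\widehat{p'},s_a,f_a)$, as required.

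There is no real obstacle here; the lemma is essentially a ``lifting'' statement, telling us that any join-type recurrence we already have at the level of $(C,N,\widehat{p})$-colorings automatically transfers to the extended colorings, because the only extra datum being tracked (the global count of a-colored vertices) is invariant under the join operation. The only point to be careful about is to state explicitly that $V(G_{\eta_{i,j}(e)}) = V(G_e)$ so that the function $c$ can be viewed as a coloring of either graph and the cardinality $n$ is unambiguous.
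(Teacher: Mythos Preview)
Your proof is correct and follows essentially the same approach as the paper: use the hypothesis to identify the inner $(C,N,\widehat{p})$-coloring condition on both sides, then observe that $V(G_{\eta_{i,j}(e)}) = V(G_e)$ so the count $n$ of $a$-colored vertices is unchanged, which makes the automaton condition $f_a(\delta^{n}(s_a))$ identical. The paper's proof is slightly terser but uses exactly this reasoning.
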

\begin{proof}
Let $c$ be a $(C, N, \widehat{p}, s_a, f_a)$-coloring of $G_{\eta_{i,j}(e)}$, i.e. $c$ is a $(C, N, \widehat{p})$-coloring of $G_{\eta_{i,j}(e)}$ such that $f_a(\delta^n(s_a))=\True$, where $n = |\{v \in V(G_{\eta_{i,j}(e)}) : c(v) = a\}|$.
By assumption, we know there exist $C', N', \widehat{p'}$ such that $c$ is also a $(C', N', \widehat{p'})$-coloring of $G_{e}$.
It remains to show that $f_a(\delta^{n'}(s_a))=\True$, where $n' = |\{v \in V(G_{e}) : c(v) = a\}|$.
But this trivially follows from the fact $n = n'$.
Thus, $c$ is also a $(C', N', \widehat{p'}, s_a, f_a)$-coloring of $G_{e}$.
The reverse can be shown similarly.
\end{proof}

%---------------------------------------------
\begin{lemma}[Relabeling: $\rho_{i \rightarrow j}(e)$]\label{lem:GP-size-relabeling}
Assume that
$$\genDomSym(\rho_{i \rightarrow j}(e), C, N, \widehat{p}) =
\min\{
	\genDomSym(e, C_e, N_e, \widehat{p_e})
	:
	P(C, N, \widehat{p}, C_e, N_e, \widehat{p_e}) =\True
\}$$
for some property $P$.
Moreover, assume that
\begin{itemize}
\item[(1)] for every $(C, N, \widehat{p})$-coloring $c$ of $G_{\rho_{i \rightarrow j}(e)}$, there exist parameters $C_e, N_e, \widehat{p_e}$ such that $P(C, N, \widehat{p},$ $C_e, N_e, \widehat{p_e})=\True$ and $c$ is a $(C_e, N_e, \widehat{p_e})$-coloring of $G_e$;

\item[(2)] for all parameters $C_e, N_e, \widehat{p_e}$ such that $P(C, N, \widehat{p}, C_e, N_e, \widehat{p_e})=\True$, if $c$ is a $(C_e, N_e, \widehat{p_e})$-coloring of $G_e$, then $c$ is also a $(C, N, \widehat{p})$-coloring $c$ of $G_{\rho_{i \rightarrow j}(e)}$.
\end{itemize}

Then,
$$\genDomSym(\rho_{i \rightarrow j}(e), C, N, \widehat{p}, s_a, f_a) =
\min\{
	\genDomSym(e, C_e, N_e, \widehat{p_e}, s_a, f_a)
	:
	P(C, N, \widehat{p}, C_e, N_e, \widehat{p_e})=\True
\}.$$
Moreover,
\begin{itemize}
\item[(a)] for every $(C, N,\widehat{p}, s_a, f_a)$-coloring $c$ of $G_{\rho_{i \rightarrow j}(e)}$, there exist parameters $C_e, N_e, \widehat{p_e}$ such that $P(C, N, \widehat{p}, C_e, N_e,\widehat{p_e})=\True$ and $c$ is a $(C_e, N_e,\widehat{p_e}, s_a, f_a)$-coloring of $G_e$;

\item[(b)] for all parameters $C_e, N_e,\widehat{p_e}$ such that $P(C, N, \widehat{p}, C_e, N_e,\widehat{p_e})=\True$, if $c$ is a $(C_e, N_e,\widehat{p_e},$ $s_a, f_a)$-coloring of $G_e$, then $c$ is also a $(C, N, \widehat{p}, s_a, f_a)$-coloring $c$ of $G_{\rho_{i \rightarrow j}(e)}$.
\end{itemize}

\end{lemma}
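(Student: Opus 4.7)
The plan is to reduce the statement directly to assumptions (1) and (2) by exploiting the fact that a relabeling operation does not modify the vertex set of the constructed graph, only the labeling function. In particular, $V(G_{\rho_{i \rightarrow j}(e)}) = V(G_e)$, so for any coloring $c$ on this common vertex set, the count $n = |\{v : c(v) = a\}|$ coincides whether $c$ is viewed as a coloring of $G_{\rho_{i \rightarrow j}(e)}$ or of $G_e$. Consequently, the extra automaton condition $f_a(\delta^n(s_a)) = \True$ transfers verbatim between the two graphs, which is the crux of the whole argument.

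With this observation, I would first establish items (a) and (b), from which the minimum-weight equality follows. For (a), given a $(C, N, \widehat{p}, s_a, f_a)$-coloring $c$ of $G_{\rho_{i \rightarrow j}(e)}$, note that $c$ is in particular a $(C, N, \widehat{p})$-coloring, so assumption (1) furnishes parameters $C_e, N_e, \widehat{p_e}$ with $P(C, N, \widehat{p}, C_e, N_e, \widehat{p_e}) = \True$ and such that $c$ is a $(C_e, N_e, \widehat{p_e})$-coloring of $G_e$. Since $f_a(\delta^n(s_a)) = \True$ already holds by hypothesis and $n$ is the same quantity relative to $G_e$, it follows that $c$ is a $(C_e, N_e, \widehat{p_e}, s_a, f_a)$-coloring of $G_e$. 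Item (b) is symmetric, applying assumption (2) and again using invariance of $n$.

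From (a) and (b) the minimum-weight equality is immediate: each $(C, N, \widehat{p}, s_a, f_a)$-coloring of $G_{\rho_{i \rightarrow j}(e)}$ corresponds to an equal-weight $(C_e, N_e, \widehat{p_e}, s_a, f_a)$-coloring of $G_e$ for some parameters satisfying $P$, and conversely, so each side of the claimed equation is bounded above by the other. The corner case where $\lambda = \Error$ on some side is handled by the same correspondence, since absence of a witness on one side forces absence on the other. I do not anticipate any genuine obstacle here: the substantive content already resides in hypotheses (1) and (2), and the global size property only imposes an additional predicate on a quantity invariant under the relabeling operation. The only point requiring care is keeping the notation straight and being fully explicit about the identification of $n$ across the two graphs.
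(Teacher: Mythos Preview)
Your proposal is correct and follows essentially the same approach as the paper: both exploit that $V(G_{\rho_{i \rightarrow j}(e)}) = V(G_e)$, so the count $n$ of vertices colored $a$ is invariant, allowing the automaton condition $f_a(\delta^n(s_a))=\True$ to transfer directly once assumptions (1) and (2) have handled the underlying $(C,N,\widehat{p})$-coloring. The only cosmetic difference is that you prove (a) and (b) first and derive the minimum-weight equality from them, whereas the paper proves the two inequalities directly and remarks that (a) and (b) are implicit.
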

\begin{proof}
Let $\alpha = \min\{ \genDomSym(e, C_e, N_e, \widehat{p_e}, s_a, f_a) : P(C, N, \widehat{p}, C_e, N_e, \widehat{p_e}) = \True \}$.
We will first prove that $\genDomSym(\rho_{i \rightarrow j}(e), C, N, \widehat{p}, s_a, f_a)\geq \alpha$.
Let $c$ be a $(C, N,\widehat{p}, s_a, f_a)$-coloring of $G_{\rho_{i \rightarrow j}(e)}$.
We show that there exist parameters $C_e, N_e,\widehat{p_e}$ such that $P(C, N, \widehat{p}, C_e, N_e,\widehat{p_e})=\True$ and $c$ is a $(C_e, N_e,\widehat{p_e}, s_a, f_a)$-coloring of $G_e$.
By definition, $c$ is a $(C, N, \widehat{p})$-coloring of $G_{\rho_{i \rightarrow j}(e)}$ such that $f_a(\delta^{n}(s_a)) = \True$, where $n = |\{v \in V(G_{\rho_{i \rightarrow j}(e)}) : c(v) = a\}|$.
Therefore, by assumption (1), there exist parameters $C_e, N_e, \widehat{p_e}$ such that $P(C, N, \widehat{p}, C_e, N_e, \widehat{p_e})=\True$ and $c$ is a $(C_e, N_e,\widehat{p_e})$-coloring of $G_e$.
Furthermore, since $n_e = |\{v \in V(G_{e}) : c(v) = a\}| = |\{v \in V(G_{\rho_{i \rightarrow j}(e)}) : c(v) = a\}| = n$, it immediately follows that $f_a(\delta^{n_e}(s_a)) = \True$.
Thus, $c$ is a $(C_e, N_e,\widehat{p_e}, s_a, f_a)$-coloring of $G_e$.
If we consider in particular a $(C, N,\widehat{p}, s_a, f_a)$-coloring $c$ of $G_{\rho_{i \rightarrow j}(e)}$ of minimum weight, then $\genDomSym(\rho_{i \rightarrow j}(e), C, N, \widehat{p}, s_a, f_a) = \textsc{w}(c) \geq \genDomSym(e, C_e, N_e,\widehat{p_e}, s_a, f_a) \geq \alpha$.

Let us now show that $\genDomSym(\rho_{i \rightarrow j}(e), C, N, \widehat{p}, s_a, f_a) \leq \alpha$.
Let $C_e, N_e, \widehat{p_e}$ be such that $P(C, N, \widehat{p}, C_e,$ $N_e, \widehat{p_e}) = \True$.
Let $c$ be a $(C_e, N_e,\widehat{p_e}, s_a, f_a)$-coloring of $G_e$.
By definition, $c$ is a $(C_e, N_e, \widehat{p_e})$-coloring of $G_e$ such that $f_a(\delta^{n_e}(s_a)) = \True$, where $n_e = |\{v \in V(G_{e}) : c(v) = a\}|$.
Then, by assumption (2), $c$ is also a $(C, N, \widehat{p})$-coloring $c$ of $G_{\rho_{i \rightarrow j}(e)}$.
Furthermore, since $n = |\{v \in V(G_{\rho_{i \rightarrow j}(e)}) : c(v) = a\}| = |\{v \in V(G_{e}) : c(v) = a\}| = n_e$, it immediately follows that $f_a(\delta^{n}(s_a)) = \True$.
Therefore, $c$ is a $(C, N,\widehat{p}, s_a, f_a)$-coloring $c$ of $G_{\rho_{i \rightarrow j}(e)}$.
If we consider in particular a $(C_e, N_e,\widehat{p_e}, s_a, f_a)$-coloring $c$ of $G_e$ for which $\alpha$ is obtained, then $\alpha = \textsc{w}(c) \geq \genDomSym(\rho_{i \rightarrow j}(e), C, N, \widehat{p}, s_a, f_a)$.

Notice that (a) and (b) are implicitly shown by the above.
\end{proof}

%---------------------------------------------------------------------------------------------------
\subsubsection{Complexity of the modified algorithm}
First, assume that for any possible parameter $f_a$ and state $q$, $\delta(s, 1)$ and $f_a(q)$ can be computed in constant time.
Also, assume that we want to fix the size of $\mathcal{R}$ color classes $a_1, \ldots, a_{\mathcal{R}}$.
Let $\mathcal{S}$ be the size of the largest set of states among the $\mathcal{R}$ considered automata.
Then, we add a term $\mathcal{R}$ to the complexity corresponding to the operation of creating a new labeled vertex, and we multiply by a term $\mathcal{S}^{\mathcal{R}}$ the complexity corresponding to the disjoint union operation.
Moreover, whenever we go through all the possible $\genDomSym(e, C, N, \widehat{p}, s_{a_1}, f_{a_1}, \ldots, s_{a_\mathcal{R}}, f_{a_\mathcal{R}})$, we multiply the complexity by a factor $(\mathcal{S} (\mathcal{S}+1))^{\mathcal{R}}$.
Hence, since $\mathcal{R}$ is at most the number of colors and $\mathcal{S} \leq |V(G)|$, we conclude that the new algorithm is also XP parameterized by clique-width when the number of colors is constant.

%--------------------------------------------------------------------------------------------------------------------------------------------------
\end{document}